\DeclareMathOperator*{\argmin}{arg\,min}
\newcommand{\algmargin}{\the\ALG@thistlm}
\newlength{\whilewidth}
\algnewcommand{\parState}[1]{\State%
  \parbox[t]{\dimexpr\linewidth-\algmargin}{\strut #1\strut}}
\title{Euclidean TSP, Motorcycle Graphs, and Other New Applications of Nearest-Neighbor Chains}
\titlerunning{New Applications of Nearest-Neighbor Chains}%optional, please use if title is longer than one line
\author{Nil Mamano\footnote{Corresponding author.}}{Department of Computer Science, University of California Irvine, US}{nmamano@uci.edu}{https://orcid.org/0000-0003-0414-2885
}{}
\author{Alon Efrat}{Departnment of Computer Science, University of Arizona, US}{alon@cs.arizona.edu}{}{}
\author{David Eppstein}{Department of Computer Science, University of California Irvine, US}{eppstein@uci.edu}{}{Supported in part by NSF grants CCF-1618301 and CCF-1616248.}
\author{Daniel Frishberg}{Department of Computer Science, University of California Irvine, US}{dfrishbe@uci.edu}{https://orcid.org/0000-0002-1861-5439}{}
\author{Michael T. Goodrich}{Department of Computer Science, University of California Irvine, US}{goodrich@uci.edu}{https://orcid.org/0000-0002-8943-191X}{Supported in part by NSF grant 1815073.}
\author{Stephen Kobourov}{Departnment of Computer Science, University of Arizona, US}{kobourov@cs.arizona.edu}
{https://orcid.org/0000-0002-0477- 2724}{Supported in part by NSF grants CCF-1740858, CCF-1712119, DMS-1839274, and DMS-1839307.}
\author{Pedro Matias}{Department of Computer Science, University of California Irvine, US}{pmatias@uci.edu}{https://orcid.org/0000-0003-0664-9145}{}
\author{Valentin Polishchuk}{Communications and Transport Systems, ITN, Linköping University, Sweden}{valentin.polishchuk@liu.se}{}{Supported in part by project 2018-04001 from the Swedish Research Council.}
\authorrunning{N. Mamano et al.}%TODO mandatory. First: Use abbreviated first/middle names. Second (only in severe cases): Use first author plus 'et al.'
\keywords{Nearest-neighbors, Nearest-neighbor chain, motorcycle graph, straight skeleton, multi-fragment algorithm, Euclidean TSP, Steiner TSP, succinct stable matching}
\newcommand{\R}{\mathbb{R}}
\newcommand{\eps}{\varepsilon}
\begin{document}

\maketitle

\begin{abstract}
We show new applications of the nearest-neighbor chain algorithm, a technique that originated in agglomerative hierarchical clustering. We apply it to a diverse class of geometric problems: we construct the greedy multi-fragment tour for Euclidean TSP in $O(n\log n)$ time in any fixed dimension and for Steiner TSP in planar graphs in $O(n\sqrt{n}\log n)$ time; we compute motorcycle graphs (which are a central part in straight skeleton algorithms) in $O(n^{4/3+\eps})$ time for any $\eps>0$; we introduce a narcissistic variant of the $k$-attribute stable matching model, and solve it in $O(n^{2-4/(k(1+\eps)+2)})$ time; we give a linear-time $2$-approximation for a 1D geometric set cover problem with applications to radio station placement.
\end{abstract}

\section{Introduction}\label{sec:intro}
The \textit{nearest-neighbor chain} (NNC) technique is used for agglomerative hierarchical clustering, and has only seen one other use besides it. In this paper, we apply it to an assortment of new problems: multi-fragment TSP, straight skeletons, narcissistic $k$-attribute stable matching, and a server cover problem. These problems share a property with agglomerative hierarchical clustering, which we call global-local equivalence, and which is the key to using the NNC algorithm.  First, we review the NNC algorithm in the context of clustering.

\subsection{Prior work: NNC in hierarchical clustering}
Given a set of points, the \textit{agglomerative hierarchical clustering} problem is defined procedurally as follows: each point starts as a base cluster, and the two closest clusters are repeatedly merged until there is only one cluster left. This creates a \textit{hierarchy}, where any two clusters are either nested or disjoint. A key component of hierarchical clustering is the function used to measure distances between clusters. Popular metrics include minimum distance (or single-linkage), maximum distance (or complete-linkage), and centroid distance.

We call two clusters \textit{mutually nearest neighbors} (MNNs) if they are the nearest neighbor of each other. 
Consider this alternative, non-deterministic procedure: instead of repeatedly merging the two overall closest clusters, merge any pair of MNNs. Clearly, this may merge clusters in a different order. Nonetheless, if the cluster-distance metric satisfies a property called \textit{reducibility}, this procedure results in the same hierarchy~\cite{Bruynooghe77,bruynooghe1978,Muellner2011}. A cluster-distance metric $d(\cdot,\cdot)$ is reducible if for any clusters $A,B,C$: if $A$ and $B$ are MNNs, then 
\begin{equation}\label{eq:hcredu}
d(A\cup B, C)\geq \min{(d(A,C),d(B,C))}.
\end{equation}
In words, the new cluster $A\cup B$ resulting from merging $A$ and $B$ is \textit{not} closer to other clusters than both $A$ and $B$ were. The relevance of this property is that, if, say, $C$ and $D$ are MNNs, merging $A$ and $B$ does not break that relationship. 
The net effect is that MNNs can be merged in any order and produce the same result. Many commonly used metrics are reducible, including minimum--, maximum--, and average--distance, but others such as centroid and median distance are not.

The NNC algorithm exploits this reducibility property, which was originally observed by Bruynooghe~\cite{bruynooghe1978}. We briefly review the algorithm for hierarchical clustering, since we discuss it in detail later in the context of the new problems. For extra background on NNC for hierarchical clustering, see~\cite{murtagh1983,Muellner2011}. 
The basic idea is to maintain a stack (called \textit{chain}) of clusters. The first cluster is arbitrary. The chain is always extended with the nearest neighbor (NN) of the current cluster at the top of the chain. Note that the distance between clusters in the chain keeps decreasing, so (with an appropriate tie breaking rule) no repeated clusters or ``cycles'' occur, and the chain inevitably reaches a pair of MNNs. At this point, the MNNs are merged and removed from the chain. Crucially, after a merge happens, the rest of the chain is not discarded. Due to reducibility, every cluster in the chain still points to its NN, so the chain is still valid. The process continues from the new top of the chain.

The algorithm is efficient because each cluster is added to the chain only once, since it stays there until it is merged with another cluster. As we will see in detail for other problems, this bounds the number of iterations to be linear on the input size, with the cost of each iteration dominated by a NN computation. 

\subsection{Our contributions}\label{sec:prior}
Our key observation is that this equivalence between merging closest pairs and MNNs is not unique to hierarchical clustering. The problems in this paper, even though they are not about clustering, exhibit an analogous phenomenon, for which we coin the term \textit{global-local equivalence}. The main thesis of this paper is that NNC is an efficient algorithm for problems with global-local equivalence, which includes many more problems than hierarchical clustering.

Recently, the NNC algorithm was used for the first time outside of the domain of hierarchical clustering~\cite{eppstein2017_2,eppstein2017}. It was used in a stable matching problem where the two sets to be matched are point sets in a metric space, and each agent in one set ranks the agents in the other set by distance, with closer points being preferred. In this setting, there is a form of global-local equivalence: the stable matching is unique, and it can be obtained in two ways: by repeatedly matching the closest pair (from different sets), or by repeatedly matching MNNs. They used the NNC algorithm to solve the problem efficiently. %In later work~\cite{eppstein2017_2}, they characterized the class of preferences that exhibit global-local equivalence, which they called \textit{symmetric stable matching}.

In this paper, we consider global-local equivalence in the context of the new problems, and give NNC-type algorithms for them. We summarize the computational results here. See each section for extended background on the corresponding problems.

\subparagraph*{Multi-fragment TSP.}
A classic heuristic for the Euclidean Traveling Salesman Problem is the multi-fragment algorithm. While not having strong approximation guarantees, experimental results show that it performs better than other heuristics, particularly in geometric instances~\cite{Krari17,JohnMcGe97,misev11,Moscato1994AnAO,Bentley1990,bentley92}. We do not know of any subquadratic algorithm to compute the tour produced by this heuristic, which we call the multi-fragment tour. We give a $O(n\log n)$-time algorithm for computing the multi-fragment tour of a point set in any fixed dimension and using any $L_p$ metric. We also consider the Steiner TSP problem in a graph-theoretical framework~\cite{Cornuejols1985}, where we give a $O(n\sqrt{n}+k\sqrt{n}\log n)$-time algorithm for finding the multi-fragment tour through a subset of $k$ nodes in planar graphs and, more generally, graph families with $O(\sqrt{n})$-size separators. 

\subparagraph*{Straight skeletons and motorcycle graphs.}
The fastest algorithms for computing straight skeletons consist of two phases, neither of which dominates the other~\cite{Cheng2016}. The first phase is a motorcycle graph computation. The best currently known algorithm for motorcycle graphs runs in $O(P(n)+n(T(n)+\log n)\log n)$ time, where $P(n)$ and $T(n)$ are the preprocessing time and operation time (maximum between query and update) of a dynamic ray-shooting data structure for curtains in $\R^3$~\cite{Vigneron2014}. We improve this to $O(P(n)+nT(n))$. Using the structure from~\cite{agarwal93}, both algorithms run in $O(n^{4/3+\eps})$ for any $\eps>0$, but if both use the same $\eps$ in the data structure, ours is faster by a $O(\log n)$ factor.

\subparagraph*{Narcissistic $k$-attribute stable matching.}
Given that $O(n^2)$ is optimal for general stable matching instances, it is interesting to study restricted models. We introduce a narcissistic variant of the $k$-attribute model~\cite{bhatnagar2008} and give a subquadratic, $O(n^{2-4/(k(1+\eps)+2)})$-time algorithm for it, for any $\eps>0$.

\subparagraph*{Server cover.} 
We give a linear-time $2$-approximation for a one-dimensional version of a \textit{server coverage} problem: given the locations of $n$ clients and $m$ servers, which can be seen as houses and telecommunication towers, the goal is to assign a ``signal strength'' to each communication tower so that they reach all the houses, minimizing the cost of transmitting the signals. This improves upon the $O(m + n\log m)$-time algorithm by Alt et al.~\cite{carrots} with the same approximation ratio.

\paragraph*{Paper organization.}
Section~\ref{sec:snn} introduces a new data structure, which we call the \textit{soft nearest-neighbor data structure}. 
Section~\ref{sec:mfh} solves multi-fragment Euclidean TSP with a variant of NNC that uses this structure.
\ifthenelse{\boolean{long}}{Sections~\ref{sec:motorcycle},~\ref{sec:kattr}, and~\ref{sec:clientcover} are on motorcycle graphs, narcissistic $k$-attribute stable matching, and server cover, respectively.

Thus, Section~\ref{sec:mfh} relies on Section~\ref{sec:snn}, but the other sections are independent, self-contained, and in no particular order.}{We extend this result to Steiner TSP in Appendix~\ref{sec:mfgraph}.
Section~\ref{sec:motorcycle} is on motorcycle graphs.
Due to space constraints, we defer narcissistic $k$-attribute stable matching and server cover, respectively, to appendices~\ref{sec:kattr} and~\ref{sec:clientcover}, which are independent, self-contained, and in no particular order.}
We give concluding remarks in Section~\ref{sec:conclusions}.

\section{The Soft Nearest-Neighbor Data Structure}\label{sec:snn}
Throughout this section, we consider points in $\R^\delta$, for some fixed dimension $\delta$, and distances measured under any $L_p$ metric $d(\cdot,\cdot)$.
We begin with a formal definition of the structure and the main result of this section. 
\begin{definition}[Dynamic soft nearest-neighbor data structure]
Maintain a dynamic set of points, $P$, subject to insertions, deletions, and \textit{soft nearest-neighbor} queries: given a query point $q$, return either of the following:
\begin{itemize}
    \item The nearest neighbor of $q$ in $P$: $p^*=\argmin_{p\in P} d(q,p)$.
    \item A pair of points $p,p'$ in $P$ satisfying $d(p,p')<d(q,p^*)$. 
\end{itemize}
\end{definition}

\begin{theorem}\label{thm:snn}
In any fixed dimension, and for any $L_p$ metric, there is a dynamic soft nearest-neighbor data structure that maintains a set of $n$ points with $O(n\log n)$ preprocessing time and $O(\log n)$ time per operation (queries and updates).
\end{theorem}

We label the two types of answers to soft nearest-neighbor (SNN) queries as \emph{hard} or \emph{soft}. 
A ``standard'' NN data structure is a special case of a SNN structure that always gives hard answers. However, in light of Theorem~\ref{thm:snn}, a standard NN structure would not be as efficient as a SNN structure. For comparison, the best dynamic NN structure in $\R^2$ requires $O(\log^5 n)$ time per operation~\cite{Chan2010,KapMulRod-16}.

In our implementation, we use the following data structure. Given a point set $P$ and a point $q$, let $p^*_i$ denote the $i$-th closest point to $q$ in $P$.

\begin{definition}[Dynamic $\eps$-approximate $k$ nearest-neighbor ($k$-ANN) data structure]
Maintain a dynamic set of points, $P$, subject to insertions, deletions, and \textit{$\eps$-approximate $k$ nearest-neighbor} queries: given a query point $q$ and an integer $k$ with $1\leq k\leq |P|$, return $k$ points $p_1,\ldots,p_k\in P$ such that, for each $p_i$, $d(q,p_i)\leq (1+\eps)d(q,p^*_i)$, where $\eps>0$ is a constant known at construction time\footnote{Some approximate nearest-neighbor data structures~\cite{arya1998optimal} do not need to know $\eps$ at construction time, and, in fact, allow $\eps$ to be part of the query and to be different for each query. Clearly, such data structures are also valid for our needs.}. 
\end{definition}

We reduce each SNN query to a single $k$-ANN query with constant $\eps$ and $k$. Once we show this reduction, Theorem~\ref{thm:snn} will follow from the following result by Arya et al.~\cite{arya1998optimal}:
\begin{lemma}[\cite{arya1998optimal}]
In any fixed dimension, and for any $L_p$ metric, there is a dynamic $\eps$-approximate $k$ nearest-neighbor data structure with $O(n\log n)$ preprocessing time and $O(\log n)$ time per operation (query and updates) for constant $k$ and $\eps>0$.
\end{lemma}

\subsection{Soft nearest-neighbor implementation}
We maintain the point set $P$ in a dynamic $k$-ANN structure ($\eps$ depends on the metric space, and will be determined later).
In what follows, $q$ denotes an arbitrary query point and $p^*_i$ the $i$-th closest point to $q$ in $P$. For ease of presentation, we assume throughout the section that $d(q,p_1^*)=1$. This scaling does not affect any result. Queries rely on the following lemma.

\begin{lemma}\label{lem:pack}
Consider a query $(q,k)$ to a $k$-ANN structure. If none of the $k$ returned points, $p_1,\ldots,p_k,$ is $p_1^*$, then, for each $p_i$ with $1\leq i\leq k$, we have that $d(q,p_i)\leq (1+\eps)^i$.
\end{lemma}

\begin{proof}
For $i=1$, the fact follows immediately from the definition of the $k$-ANN structure (and the assumption that $d(q,p_1^*)=1$). For $i=2,\ldots,k$, note that 
$d(q,p_i^*)\leq d(q, p_{i-1})$. This is because there are at least $i$ points within distance $d(q, p_{i-1})$ of $q$: $p_1^*,p_1,\ldots,p_{i-1}$. Thus,
$d(q, p_i)\leq (1+\eps)d(q, p_i^*)\leq(1+\eps)d(q, p_{i-1}).$
The claim follows by induction.
\ifthenelse{\boolean{long}}{
It is illustrated in Figure~\ref{fig:snn}.
\begin{figure}
    \centering
    \includegraphics[width=0.45\textwidth]{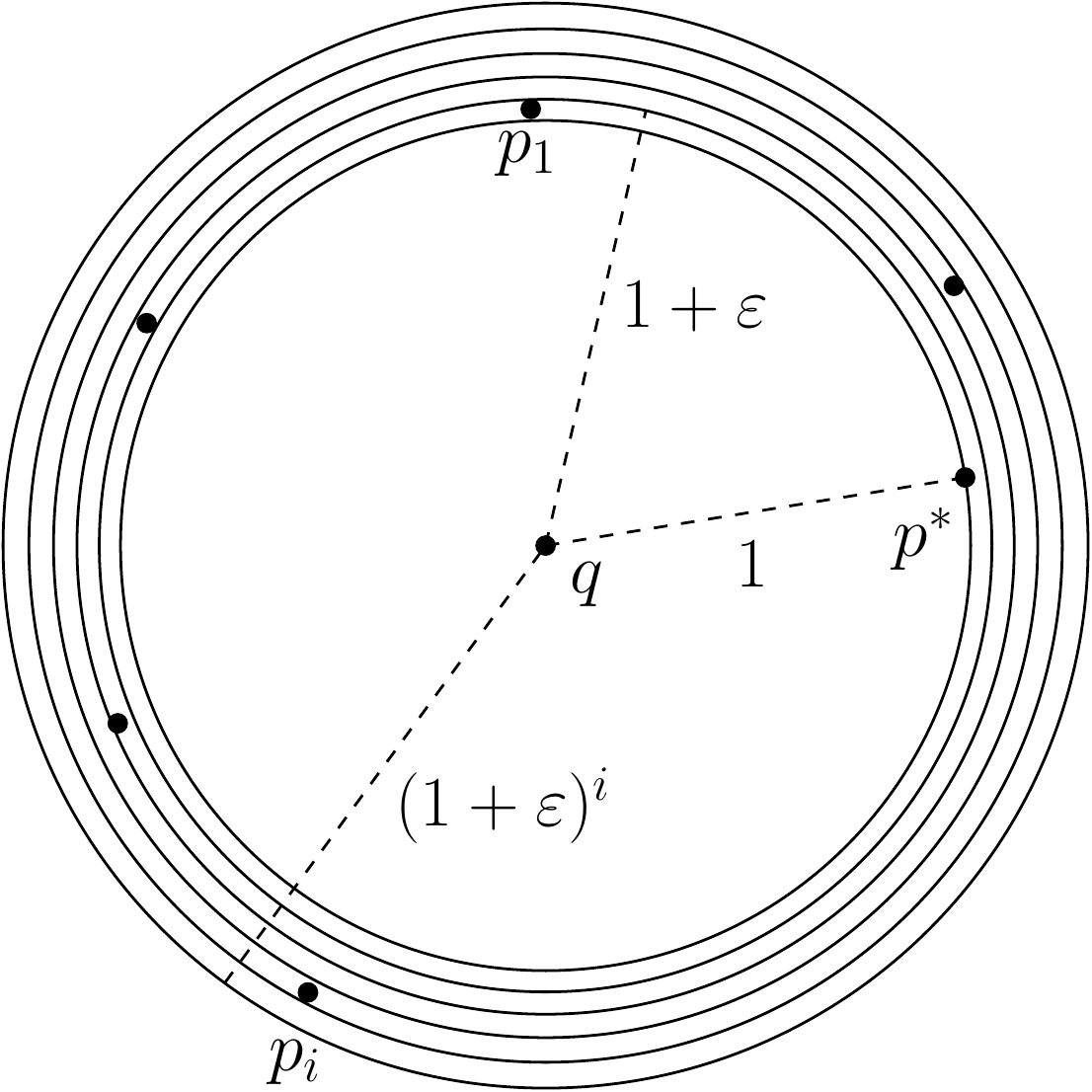}
    \caption{Setting of Lemma~\ref{lem:pack}. The circles centered at $q$ have radius $1,1+\varepsilon,(1+\varepsilon)^2,\ldots$. The point $p^*$ is the true NN of $q$, and the other points are the first points returned by the $k$-ANN structure. If $p^*$ is not one of the returned points, each point $p_i$ is within distance $(1+\varepsilon)^i$ of $q$.}
    \label{fig:snn}
\end{figure}
}{}
\end{proof}

Let $S(q,r_1,r_2)$ denote a closed shell centered at $q$ with inner radius $r_1$ and outer radius $r_2$ (i.e., $S(q,r_1,r_2)$ is the difference between two balls centered at $q$, the bigger one of radius $r_2$ and the smaller one of radius $r_1$). From Lemma~\ref{lem:pack}, we get the following.

\begin{corollary}\label{cor:pack}
Consider a query $(q,k)$ to an $k$-ANN structure. If none of the $k$ returned points, $p_1,\ldots,p_k$, is $p_1^*$, then they all lie in $S(q,1,(1+\eps)^k)$.
\end{corollary}

We call a pair $(\eps,k)$ \textit{valid parameters} if, in any set of $k$ points inside a shell with inner radius $1$ and outer radius $(1+\eps)^{k}$, there must exist two points $p,p'$ satisfying $d(p,p')<1$. Suppose that $(\eps^*,k^*)$ are valid parameters. Initially, we construct the $k$-ANN structure using $1+\eps^*$ as the approximation factor. Then we answer queries as in Algorithm~\ref{alg:snn}.

\begin{algorithm}
\caption{Soft nearest-neighbor query.}
\label{alg:snn}
\begin{algorithmic}
\State Ask query $(q, k^*)$ to the $k$-ANN structure initialized with $\eps^*$.
\State Measure the distance between each pair of the $k^*$ returned points, $p_1,\ldots,p_{k^*}$.
\If{any pair $(p,p')$ satisfy $d(p,p')<1$}
\State\Return $p,p'$.
\Else 
\State\Return the point $p_i$ that is closest to $q$.
\EndIf
\end{algorithmic}
\end{algorithm}

\begin{lemma}
If $(\eps^*,k^*)$ are valid parameters, Algorithm~\ref{alg:snn} is correct.
\end{lemma}

\begin{proof}
If a pair $p,p'$ of points returned by the $k$-ANN structure satisfy $d(p,p')<1$, $p$ and $p'$ are a valid soft answer to the SNN query. Thus, consider the alternative case: no pair of the $k^*$ returned points is at distance ${<1}$. 
Then, because ($\eps^*,k^*$) are valid, at least one of the returned points must be outside of $S(q,1,(1+\eps^*)^{k^*})$. By the contrapositive of Corollary~\ref{cor:pack}, one of them must be $p_1^*$.
\end{proof}

\ifthenelse{\boolean{long}}{
As a side note, a SNN structure always returns a hard answer when queried from a point that is part of the closest pair of the set of points it maintains, as there is no \textit{closer} pair. In this way, a SNN structure can be used to find the closest pair in $(\R^\delta,L_p)$, for constant $\delta$, in $O(n\log n)$ time by querying from every point. This matches the known runtimes in the literature~\cite{Bespamyatnikh1998}.
}{}

\ifthenelse{\boolean{long}}{
\subsection{Choice of parameters}\label{sec:params}

We left open the question of finding valid parameters $(\eps^*,k^*)$.
This question is related to the \textit{kissing number} of the metric space, which is the maximum number of points that can be on the surface of a unit sphere all at pairwise distance $\geq 1$. For instance, it is well known that the kissing number is $6$ in $(\R^2,L_2)$ and $12$ in $(\R^3,L_2)$. It follows that, in $(\R^2,L_2)$, $(\eps^*=0,k^*=7)$ are valid parameters. Of course, we are interested in $\eps^*>0$. Thus, our question is more general in the sense that our points are not constrained to lie on a sphere, but in a shell (and, to complicate things, the width of the shell depends on the number of points).

\begin{lemma}\label{lem:kissing}
There are valid parameters in any metric space $(\R^\delta,L_p)$.
\end{lemma}

\begin{proof}
Consider a shell with inner radius $1$ and outer radius $1+c$, for some constant $c>0$. A set of points in the shell at pairwise distance ${\geq 1}$ corresponds to a set of disjoint balls of radius $1/2$ centered inside the shell. Consider the volume of the intersection of the shell with such a ball. This volume is lower bounded by some constant, $v$, corresponding to the case where the ball is centered along the exterior boundary.
Since the volume of the shell, $v_s$, is itself constant, the maximum number of disjoint balls of radius $1/2$ that fit in the shell is constant smaller than $v_s/v$. This is because no matter where the balls are placed, at least $v$ volume of the shell is inside any one of them, so, if there are more than $v_s/v$ balls, there must be some region in the shell inside at least two of them. This corresponds to two points at distance ${<1}$.

Set $k$ to be $v_s/v$, and $\eps$ to be the constant such that $(1+\eps)^k=1+c$. Then, $(\eps,k)$ are valid parameters for $(\R^\delta,L_p)$.
\end{proof}

The dependency of $k$-ANN structures on $1/\eps$ is typically severe. Thus, for practical purposes, one would like to find a valid pair of parameters with $\eps$ as big as possible. The dependency on $k$ is usually negligible in comparison, and, in any case, $k$ cannot be too large because the shell's width grows exponentially in $k$. Thus, we narrow the question to optimizing $\eps$: what is the largest $\eps$ that is part of a pair of valid parameters?

We first address the case of $(\R^2,L_2)$, where we derive the optimal value for $\eps$ analytically. We then give a heuristic, numerical algorithm for general $(\R^\delta,L_p)$ spaces.

\paragraph*{Parameters in $(\R^2,L_2)$.}
Let $\eps_\varphi\approx 0.0492$ be the number such that $(1+\eps_\varphi)^{10}=\varphi$, where $\varphi=\frac{1+\sqrt{5}}{2}$ is the golden ratio.
The valid parameters with largest $\eps$ for $(\R^2,L_2)$ are $(\eps^*<\eps_\varphi,k^*=10)$ ($\eps^*$ can be arbitrarily close to $\eps_\varphi$, but must be smaller).
This follows from the following observations.
\begin{itemize}
    \item The kissing number is $6$, so there are no valid parameters with $k< 6$.
    \item The thinnest annulus (i.e., 2D shell) with inner radius $1$ such that $10$ points can be placed inside at pairwise distance ${\geq 1}$ has outer radius $\varphi=(1+\eps_\varphi)^{10}$. Figure~\ref{fig:flowers}, top, illustrates this fact. In other words, if the outer radius is any smaller than $\varphi$, two of the $10$ points would be at distance ${<1}$. Thus, any valid pair with $k=10$ requires $\eps$ to be smaller than $\eps_\varphi$, but any value smaller than $\eps_\varphi$ forms a valid pair with $k=10$.
    \item For $6\leq k<10$ and for $k>10$, it is possible to place $k$ points at pairwise distance ${>1}$ in an annulus of inner radius $1$ and outer radius $(1+\eps_\varphi)^k$, and they are not packed ``tightly'', in the sense that $k$ points at pairwise distance ${>1}$ can lie in a thinner annulus. This can be observed easily; Figure~\ref{fig:flowers} (bottom) shows the cases for $k=9$ and $k=11$. Cases with $k<9$ can be checked one by one; in cases with $k>11$, the annulus grows at an increasingly faster rate, so placing $k$ points at pairwise distance ${>1}$ of each other becomes increasingly ``easier''. Thus, for any $k\not=10$, any valid pair with that specific $k$ would require an $\eps$ smaller than $\eps_\varphi$. 
\end{itemize}

\begin{figure}[t]
    \centering
    \includegraphics[width=0.95\linewidth]{}
    \caption{\textbf{Top:} The first figure shows two concentric circles of radius $1$ and $\varphi$ with an inscribed pentagon and decagon, respectively, and some proportions of these shapes. The other figures show two different ways to place 10 points at pairwise distance ${\geq 1}$ inside an annulus of inner radius $1$ and outer radius $(1+\eps_\varphi)^{10}=\varphi$. Disks of radius $1/2$ around each point are shown to be non-overlapping. In one case, the points are placed on the vertices of the decagon. In the other, they alternate between vertices of the decagon and the pentagon. In both cases, the distance between adjacent disks is $0$. Thus, these packings are ``tight'', i.e., if the annulus were any thinner, there would be two of the $10$ points at distance $<1$. \textbf{Bottom:} $9$ and $11$ points at pairwise distance ${\geq 1}$ inside annuli of radius $(1+\eps_\varphi)^{9}$ and $(1+\eps_\varphi)^{11}$, respectively. These packings are not tight, meaning that, for $k=9$ and $k=11$, a valid value of $\eps$ would have to be smaller than $\eps_\varphi$.}
    \label{fig:flowers}
\end{figure}

\paragraph*{Parameters in \texorpdfstring{$(\R^\delta,L_p)$}{other metric spaces}.}
For other $L_p$ spaces, we suggest a numerical approach. We can do a binary search on the values of $\eps$ to find one close to optimal. For a given value of $\eps$, we want to know if there is any $k$ such that $(\eps,k)$ are valid. We can search for such a $k$ iteratively, trying $k=1,2,\ldots$ (the answer will certainly be ``no'' for any $k$ smaller than the kissing number). Note that, for a fixed $k$, the shell has constant volume. 
As in Lemma~\ref{lem:kissing}, let $v$ be the volume of the intersection between the shell and a ball of radius $1/2$ centered on the exterior boundary of the shell.  
As argued before, if $kv$ is bigger than the shell's volume, then $(\eps,k)$ are valid parameters. 
For the termination condition, note that if in the iterative search for $k$, $k$ reaches a value where the volume of the shell grows more than $v$ in a single iteration, no valid $k$ for that $\eps$ will be found, as the shell grows faster than the new points cover it.

Besides the volume check, one should also consider a lower bound on how much of the shell's surface (both inner and outer) is contained inside an arbitrary ball. We can then see if, for a given $k$, the amount of surface contained inside the $k$ balls is bigger than the total surface of the shell, at which point two balls surely intersect. This check finds better valid parameters than the volume one for relatively thin shells, where the balls ``poke'' out of the shell on both sides.

}{
We left open the question of finding valid parameters $(\eps^*,k^*)$, which we defer to Appendix~\ref{sec:params}. In particular, it is not hard to see that they exist in any metric space $(\R^\delta,L_p)$ (see Lemma~\ref{lem:kissing}).
}

\section{Multi-Fragment Euclidean TSP}\label{sec:mfh}
The \textit{Euclidean Travelling Salesperson Problem} asks to find, given a set of points, a closed \textit{tour} (a closed polygonal chain) through all the points of shortest length. The problem is NP-hard even in this geometric setting, but a polynomial-time approximation scheme is known~\cite{arora1998polynomial}.

In this section, we consider a classic greedy heuristic for constructing TSP tours, \textit{multi-fragment} TSP. In this algorithm, each point starts as a single-node path. While there is more than one path, connect the two closest paths. Here, the distance $d(a,b)$ between two paths $a,b$ is measured as the minimum distance between their endpoints, and connecting two paths means adding the edge between their closest endpoints. Once there is a single path left, connect their endpoints. We call the tour resulting from this process the \textit{multi-fragment tour}.

The multi-fragment algorithm was proposed by Bentley~\cite{bentley92} specifically in the geometric setting. Its approximation ratio is $O(\log n)$~\cite{ONG1984273,Brecklinghaus15}. Nonetheless, it is used in practice due to its simplicity and empirical support that it  generally performs better than other heuristics~\cite{Krari17,JohnMcGe97,misev11,Moscato1994AnAO,Bentley1990}.

We are interested in the complexity of computing the multi-fragment tour.
A straightforward implementation of the multi-fragment algorithm is similar to Kruskal's minimum spanning tree algorithm: sort the $\binom{n}{2}$ pairs of points by increasing distances and process them in order: for each pair, if the two points are endpoints of separate paths, connect them. The runtime of this algorithm is $O(n^2\log n)$.
Eppstein~\cite{eppstein2000fast} uses dynamic closest pair data structures to compute the multi-fragment tour in $O(n^2)$ time (for arbitrary distance matrices). Bentley~\cite{bentley92} gives a $K$-$d$ tree-based implementation and says that it appears to run in $O(n\log n)$ time on uniformly distributed points in the plane.
We give a NNC-type algorithm that compute the multi-fragment tour in $O(n\log n)$ in any fixed dimensions. We do not know of any prior worst-case subquadratic algorithm.

\subsection{Global-local equivalence in multi-fragment TSP}
Since the multi-fragment algorithm operates on paths rather than points, it will be convenient to think of the input as a set of paths (a path is an open polygonal chain, although, in the context of the algorithm, only the coordinates of the endpoints are relevant). The input to Euclidean TSP corresponds to a set of paths where all paths are single-point paths.
Consider the following two strategies for constructing a tour from a set of paths, where we use $a\cup b$ to denote the path resulting from connecting paths $a$ and $b$:
\begin{itemize}
    \item While there is more than one path, connect two paths using one of the following strategies:
    \begin{enumerate}
        \item Connect the closest pair of paths.\label{strat:cp}
        \item Connect two mutually nearest-neighbor paths.\label{strat:mnn}
    \end{enumerate}
    \item Connect the two endpoints of the final path.
\end{itemize}

Strategy~\ref{strat:cp} corresponds to the multi-fragment algorithm. Note that Strategy~\ref{strat:mnn} is non-deterministic, and that Strategy~\ref{strat:cp} is a special case of Strategy~\ref{strat:mnn}. In this section, we show that \textit{any} execution of Strategy~\ref{strat:mnn} computes the multi-fragment tour.
Note the similarity between multi-fragment TSP and hierarchical clustering. We can see that in multi-fragment TSP we have a notion equivalent to reducibility in agglomerative hierarchical clustering (Equation~\ref{eq:hcredu}).

\begin{lemma}[Reducibility in multi-fragment TSP]\label{lem:reducibility}
Let $a,b,$ and $c$ be paths. Then, $d(a\cup b, c)\geq \min{(d(a,c),d(b,c))}$.
\end{lemma}

\begin{proof}
The distance between paths is defined as the minimum distance between their endpoints, and
the two endpoints of $a\cup b$ are a subset of the four endpoints of $a$ and $b$. 
\end{proof}

\begin{lemma}[Global-local equivalence in multi-fragment TSP]\label{lem:mfgle}
Assuming that there are no ties in the pairwise distances between paths, strategies~\ref{strat:cp} and~\ref{strat:mnn} produce the same tour.
\end{lemma}

\newcommand{\proofmfgle}{We adapt the proof of global-local equivalence for agglomerative hierarchical clustering presented in~\cite{Muellner2011}. We note that we can break ties using a consistent rule, such as breaking ties by the smallest index in the input.

\begin{proof}
Let $P$ be a set of paths, and let $S_1(P)$ denote the sequence of path pairs connected by Strategy~\ref{strat:cp} starting from $P$, and $T_1(P)$ the corresponding resulting tour.
Similarly, let $S_2(P)$ denote one of the possible sequences of path pairs connected by an instantiation of Strategy~\ref{strat:mnn} starting from $P$, and $T_2(P)$ the corresponding tour.
We need to show that $T_1(P) = T_2(P)$.

Proceed by induction on $|P|$. If $|P|=1$, both tours are the same because no connections happen. Thus, let $|P|>1$.
Let $(a,b)$ be the first pair of paths in $S_2(P)$. Then, consider the set $P'=P\setminus\{a,b\}\cup\{a\cup b\}$. The tour $T_2(P)$ can be seen as the tour obtained by starting from the set $P'$ and connecting the same paths as in $S_2(P)$ after the first connection $(a,b)$. Note that $|P|=|P|-1$. Thus, by the inductive hypothesis, $T_2(P)=T_1(P')$.
The bulk of the proof is to show that $T_1(P')=T_1(P)$.

First, note that $(a,b)$ is in $S_1(P)$: initially, $a$ and $b$ are MNN paths (since they are the first pair chosen by Strategy~\ref{strat:mnn}). Then, they remain so throughout the algorithm until they are connected. This is because \textit{(i)} MNN paths are not connected with other paths, and \textit{(ii)} by reducibility (Lemma~\ref{lem:reducibility}), MNN paths stay so even if other paths are connected (i.e., if $x$ and $y$ are connected, $x\cup y$ is not closer to $a$ (or $b$) than the closest of $x$ and $y$).

Let $(a,b)$ be the $k$-th pair in $S_1(P)$. Next, we show that the first $k-1$ pairs in $S_1(P)$ and $S_1(P')$ are the same and in the same order.
Let $(x,y)$ be the first pair of paths in $S_1(P)$.
By Strategy~\ref{strat:cp}, $d(x,y)$ is minimum among all distances between paths in $P$.
By Lemma~\ref{lem:reducibility}, $a\cup b$ is not closer to $x$ or $y$ than $a$ or $b$. Thus, in $S_1(P')$, $d(x,y)$ is also minimum, so $(x,y)$ is also the first element in $S_1(P')$. The claim for the next $k-2$ pairs follows analogously by  induction.

Finally, note that after the first $k$ connections in $S_1(P)$ and the first $k-1$ connections in $S_1(P')$, the corresponding partial solutions are the same. After that point, all the connections, and the final solution, must be the same in both, so $T_1(P)=T_1(P')$.
\end{proof}
}

\ifthenelse{\boolean{long}}{\proofmfgle
}{This lemma follows from an argument similar to the one presented in~\cite{Muellner2011} for the analogous result for agglomerative hierarchical clustering. We show the details of the proof in Appendix~\ref{app:mfgle}.
}

We note that Lemma~\ref{lem:mfgle} holds for arbitrary distance matrices.

\subsection{Soft nearest-neighbor chain for multi-fragment Euclidean TSP}

Given that we have global-local equivalence (Lemma~\ref{lem:mfgle}), we can use the NNC algorithm to compute the multi-fragment tour using Strategy~\ref{strat:mnn}. A straightforward adaptation of the NNC algorithm, paired with the NN structure from~\cite{Chan2010,KapMulRod-16}, yields a $O(n\log^5 n)$ runtime for $\R^2$. 
However, we skip this result and jump directly to our main result:

\begin{theorem}\label{thm:snncmf}
The multi-fragment tour of a set of $n$ points in any fixed dimension, and under any $L_p$ metric, can be computed in $O(n\log n)$ time.
\end{theorem}

We use a variation of the NNC algorithm that uses a SNN structure instead of the usual NN structure, which we call \textit{soft nearest-neighbor chain} (SNNC). For this, we need a SNN structure for paths instead of points.
That is, a structure that maintains a set of (possibly single-node) paths, and, given a query path $q$, returns the closest path to $q$ or two paths which are closer. 

\ifthenelse{\boolean{long}}{
\subparagraph*{A soft nearest-neighbor structure for paths.}

We simulate a SNN structure for paths with a SNN structure for points.
Given a set of paths, we maintain the set of path endpoints in the SNN structure for points.
Updates are straightforward: we add or remove both endpoints of the path.
Given a query path $q$ with endpoints $\{q_1,q_2\}$, we do a SNN query from each endpoint of the path. If both answers are hard (assuming that the path has two distinct endpoints, otherwise, just the one), then we find the true NN of the path, and we can return it. However, there is a complication with soft answers: the two points returned could be the endpoints of the same path. Thus, it could be the case that we find two closer points, but not two closer paths, as we need. The solution is to modify the specification of the SNN structure for points so that soft answers, instead of returning two points closer to each other than the query point to its NN, return three pairwise closer points. We call this a \textit{three-way} SNN structure. In the context of using the structure for paths, this guarantees that even if two of the three endpoints belong to the same path, at least two different paths are involved.

Lemma~\ref{lem:3waysnn} shows how to obtain a three-way SNN structure for points, Algorithm~\ref{alg:snnpath} shows the full algorithm for answering SNN queries about paths using a three-way SNN structure for points, and Lemma~\ref{lem:snnpath} shows its correctness.

\begin{lemma}\label{lem:3waysnn}
In any fixed dimension and for any $L_p$ metric, there is a three-way SNN structure with $O(n\log n)$ preprocessing time and $O(\log n)$ operation time (queries and updates).
\end{lemma}

\begin{proof}
Recall the implementation of the SNN structure from Section~\ref{sec:snn}. To obtain a three-way SNN structure, we need to change the values of $\eps$ and $k$ to make the shell smaller and $k$ bigger, so that if there are $k$ points in a shell of inner radius $1$ and outer radius $(1+\eps)^k$, then there must be at least three points at pairwise distance less than $1$. The method described in Section~\ref{sec:params} for finding valid parameters in $(R^\delta,L_p)$ also works here. It only needs to be modified so that the area (or surface) of the shell is accounted for twice. Since $k$ and $\eps$ are still constant, this does not affect the asymptotic runtimes in Theorem~\ref{thm:snn}.
\end{proof}

\begin{algorithm}
\caption{Soft-nearest-neighbor query for paths.}
\label{alg:snnpath}
\begin{algorithmic}
\State Let $q_1$ and $q_2$ be the endpoints of the query path.
\State Let $S$ be a three-way SNN structure containing the set of path endpoints.
\State Query $S$ with $q_1$ and $q_2$.
\If{both answers are hard}
\State Let $p_1$ and $p_2$ be the respective answers.
\State\Return the closest path to the query path among the paths with endpoints $p_1$ and $p_2$.
\ElsIf{one answer is hard and the other is soft}
\parState{Let $p$ be the hard answer to $q_1$ and $(a,b,c)$ the soft answer to $q_2$ (wlog). 
Let $P$ and $P'$ be the two closest paths among the paths with endpoints $a,b,$ and $c$.} \If{$d(p,q)<d(P,P')$}
\State\Return the path with endpoint $p$.
\Else
\State\Return $(P,P')$.
\EndIf
\Else $ $ (both answers are soft)
\State Let $(a_1,b_1,c_1)$ and $(a_2,b_2,c_2)$ be the answers to $q_1$ and $q_2$.
\State\Return the closest pair of paths among the paths with endpoints $a_1,b_1,c_1,a_2,b_2,c_2$. 
\EndIf 
\end{algorithmic}
\end{algorithm}

\begin{lemma}\label{lem:snnpath}
In any fixed dimension, and for any $L_p$ metric, we can maintain a set of $n$ paths in a SNN structure for paths with $O(n\log n)$ preprocessing time and $O(\log n)$ operation time (queries and updates).
\end{lemma}

\begin{proof}
All the runtimes follow from Lemma~\ref{lem:3waysnn}:
we maintain the set of path endpoints in a three-way SNN structure $S$. The structure $S$ can be initialized in $O(n\log n)$ time. Updates require two insertions or deletions to $S$, taking $O(\log n)$ time each. Algorithm~\ref{alg:snnpath} for queries clearly runs in $O(\log n)$ time. We argue that it returns a valid answer. Let $q$ be a query path with endpoints $\{q_1,q_2\}$, and consider the three possible cases:
\begin{itemize}
    \item Both answers are hard. In this case, we find the closest path to each endpoint, and, by definition, the closest of the two is the NN of $q$.
    \item One answer is soft and the other is hard. Let $p$ be the hard answer to $q_1$ and $(a,b,c)$ the soft answer to $q_2$ (wlog). Let $P$ and $P'$ be the two closest paths among the paths with endpoints $a,b,$ and $c$. If $d(p,q)<d(P,P')$, then, the path with $p$ as endpoint must be the NN of $q$, because there is no endpoint closer than $d(P,P')$ to $q_2$. Otherwise, $P,P'$ is a valid soft answer, as they are closer to each other than either endpoint of $q$ to their closest endpoints. 
    \item Both answers are soft. Assume (wlog) that the NN of $q$ is closer to $q_1$ than $q_2$. Then, the soft answer to $q_1$ gives us two paths closer to each other than $q$ to its NN, so we return a valid soft answer.
    \end{itemize}
\end{proof}

}{
Appendix~\ref{app:snnpath} shows how to adapt the SNN structure for points into a SNN structure for paths. In short, we achieve this by storing the set of path endpoints in a SNN structure for points, and, given a query path, we do a query for each endpoints. Appendix~\ref{app:snnpath} shows how to handle the technicality that a soft answer could return two endpoints which belong to the same path. We solve this by setting the parameters $(\eps,k)$ of the data structure so that, in soft answers, we get three pairwise closer points instead. 
}

\subparagraph*{The soft nearest-neighbor chain algorithm.}

We use a SNN for paths.
In the context of this algorithm, let us think of a SNN answer, hard or soft, as being a set of two paths. If the answer is hard, then one of the paths returned in the answer is the query path itself, and the remaining path is its NN. Now, we can establish a comparison relationship between SNN answers (independently of their type):
given two SNN answers $\{a,b\}$ and $\{c,d\}$, we say that $\{a,b\}$ is \emph{better} than $\{c,d\}$ if and only if $d(a,b) < d(c,d)$.

The input is a set of paths, where we again assume unique distances.
The algorithm maintains a stack (the chain) of \textit{nodes}, where each node consists of a pair of paths (with the exception of the first node in the chain, which contains a single path). In particular, each node in the chain is the best SNN answer among two queries for the two paths in the predecessor node (when querying from a path, we remove it from the structure temporarily, so that the answer is not itself).

The algorithm starts with an arbitrary path in the chain. If the chain ever becomes empty and there is still more than one path, the chain is restarted at an arbitrary path. If the best answer we get from the SNN structure is precisely the node currently at the top of the chain, we connect both paths contained in it and remove the node and its predecessor from the chain. Otherwise, we append the answer to the top of the chain as a new node. See \autoref{alg:snccpath} for a full description of the algorithm and Figure~\ref{fig:snncmftsp} for a snapshot of the algorithm.

\newcommand{\figsnncmftsp}{
\begin{figure}
    \centering
    \includegraphics[width=0.99\textwidth]{}
    \caption{\textbf{Left:} a set of paths (some of which are single points) and a possible chain, where the nodes are denoted by dashed lines and appear in the chain according to the numbering. Note that the first node is the only one containing a single path, and that all the nodes in the chain are soft answers except the fourth node. \textbf{Right:} Nearest-neighbor graph of the set of paths. For each path, a dashed/red arrow points to its NN. Further, the arrows start and end at the endpoints determining the minimum distance between the paths.}
    \label{fig:snncmftsp}
\end{figure}
}

\ifthenelse{\boolean{long}}{
\figsnncmftsp
}{}

\begin{algorithm}
\caption{Soft nearest-neighbor chain algorithm for multi-fragment Euclidean TSP}
\label{alg:snccpath}
\begin{algorithmic}
\State Initialize an empty stack (the chain).
\State Initialize a one-node path for every input point.
\State Initialize a SNN structure $S$ for paths (as in Lemma~\ref{lem:snnpath}) with the set of one-node paths.
\While{there is more than one path in $S$}
    \If{the chain is empty}
    \State add an arbitrary path from $S$ to it.
    \Else
    \State Let $U=\{u,v\}$ be the node at the top of the chain (or just $u$ for the first node).
    \State Remove $u$ from $S$, query $S$ with $u$, and re-add $u$ to $S$. 
    \State Remove $v$ from $S$, query $S$ with $v$, and re-add $v$ to $S$.
    \State Let $A$ be the best answer.
    \If{$A = U$}
        \parState{Connect $u$ and $v$, remove them from $S$, add $u\cup v$ to $S$, and remove $U$ and its predecessor from the chain.}
    \Else
        \State Add $A$ to the chain.
    \EndIf
    \EndIf
\EndWhile
\State Connect the two endpoints of the remaining path in $S$.
\end{algorithmic}
\end{algorithm}

\begin{lemma}
\label{lem:invariants}
The following invariants hold at the beginning of each iteration of \autoref{alg:snccpath}:
\begin{enumerate}
    \item The SNN structure $S$ contains a set of disjoint paths.
    \item If node $R$ appears after node $S$ in the chain, then $R$ is better than $S$. \label{inv:dec_distance}
    \item Every path in $S$ appears in at most two nodes in the chain, in which case they consist of occurrences in two consecutive nodes. \label{inv:no_repetitions}
    \item The chain only contains paths in $S$.\label{inv:nodes_unmatched}
\end{enumerate}
\end{lemma}

\begin{proof}

~

\begin{enumerate}

    \item The claim holds initially. Each time two paths $a,b$ are connected, one endpoint of each becomes an internal point in the new path $a\cup b$. Since $a$ and $b$ are removed from $S$, no path can be connected to those endpoints. 

    \item We show it for the specific case where $R$ is immediately after $S=\{s,t\}$ in the chain, which suffices. Note that $R$ is different than $S$, or it would not have been added to the chain. We distinguish between two cases:
    \begin{itemize}
        \item $s$ and $t$ were MNNs when $R$ was added. Then, $R$ had to be a soft answer from $s$ or $t$, which would have to be better than $\{s,t\}$.
        \item $s$ and $t$ were not MNNs when $R$ was added. Then, $s$ had a closer path than $t$ (wlog). Thus, whether the answer for $s$ was soft or hard, the answer had to be better than $\{s,t\}$.
    \end{itemize}

    \item Assume, for a contradiction, that three nodes $X=\{p,x\}$, $Y$ and $Z=\{p,z\}$ appear in the chain in order $X$, $Y$, $Z$ (not necessarily consecutively, and with $p$ possibly in $Y$). By Invariant~\ref{inv:dec_distance}, $Z$ is better than $Y$. It is easy to see that if $z_1$ and $z_2$ are the two endpoints of path $z$, then $z_1$ and $z_2$ were endpoints of paths since the beginning of the algorithm. Thus, the answer for $p$ when $X$ was at the top of the chain had to be a pair at distance at most $\min(d(p,z_1), d(p,z_2))$. Note that $\min(d(p,z_1), d(p,z_2))=d(p,z)$, contradicting that $Z$ is better than $Y$.

    \item We show that no node in the chain contains paths that have already been connected to form bigger paths. Whenever the two paths in the node at the top of the chain are connected, we remove the node from the chain. By Invariant~\ref{inv:no_repetitions}, their only other occurrence can only be in the predecessor node, which is also removed. In addition, since the paths are removed from $S$ when they are connected, newly added nodes to the chain only contain paths that have not been connected to form bigger path yet.
    
\end{enumerate}
\end{proof}

\begin{lemma}\label{lem:snncpathcorrectness}
Paths connected in \autoref{alg:snccpath} are MNNs in the set of paths in the SNN structure.
\end{lemma}

\begin{proof}
Let $\{u,v\}$ be the node at the top of the chain, and $A$ the best SNN answer among the $u$ and $v$ queries.
If $u$ and $v$ are not MNNs, at least one of them, $u$ (wlog), has a closer path than the other, $v$, so the answer for $u$ cannot be $\{u,v\}$. By the contrapositive, if the best answer $A$ is $\{u,v\}$, then $u$ and $v$ are MNNs. In the algorithm, $u$ and $v$ are connected precisely when $A=\{u,v\}$.
\end{proof}

\begin{proof}[Proof of \autoref{thm:snncmf}]
We show that Algorithm~\ref{alg:snccpath} computes the multi-fragment tour in $O(n\log n)$ time. 
In particular, it implements Strategy~\ref{strat:mnn}: the SNN structure maintains a set of paths, and the algorithm repeatedly connects MNNs (Lemma~\ref{lem:snncpathcorrectness}) By global-local equivalence (Lemma~\ref{lem:mfgle}), this produces the multi-fragment tour.

Note that the chain is acyclic in the sense that each node contains a path from the current set of paths in $S$ (Invariant~\ref{inv:nodes_unmatched}) not found in previous nodes (by Invariant~\ref{inv:no_repetitions}). Thus, the chain cannot grow indefinitely, so, eventually, paths get connected. The main loop does not halt until there is a single path.

If there are $n$ paths at the beginning, there are $2n-1$ different paths throughout the algorithm. This is because each connection removes two paths and adds one new path. At each iteration, either two paths are connected, which happens $n-1$ times, or one node is added to the chain. Since there are $n-1$ connections, each of which triggers the removal of two nodes in the chain, the total number of nodes removed from the chain is $2n-2$. Since every node added is removed, the number of nodes added to the chain is also $2n-2$. Thus, the total number of iterations is $3n-3$. Therefore, the total running time is $O(P(n)+nT(n))$, where $P(n)$ and $T(n)$ are the preprocessing and operation time of a SNN structure for paths. By Lemma~\ref{lem:snnpath}, this can be done in $O(n\log n)$ time.
\end{proof}

\ifthenelse{\boolean{long}}{
Incidentally, the \textit{maximum-weight matching} problem has the same type of global-local equivalence: consider the classic factor-$2$ approximation greedy algorithm that picks the heaviest edge at each iteration and discards the neighboring edges (edges with a shared endpoint)~\cite{avis83}. An alternative algorithm that picks any edge which is heavier than its neighbors produces the same matching~\cite{Hoepman2004SimpleDW}, which we call the greedy matching. In the geometric setting, we are interested in matching points minimizing distances instead (but the mentioned results still hold). The greedy algorithm repeatedly matches the closest pair. It is possible to modify Algorithm~\ref{alg:snccpath} to find the greedy matching in $O(n\log n)$ time in any fixed dimension. The algorithm is, in fact, simpler, because the SNN structure only needs to maintain points instead of paths, and matched points are removed permanently (unlike connected paths which are re-added to the set of paths). However, this is not a new result, as there is a dynamic closest pair data structure with $O(\log n)$ time per operation~\cite{Bespamyatnikh1998} which can be used to find the greedy matching in the same time bound.
}{}

\ifthenelse{\boolean{long}}{
\subsection{Steiner TSP}\label{sec:mfgraph}

In the traditional, non-Euclidean setting, a TSP instance consists of a complete graph with arbitrary distances. We remark that global-local equivalence (Lemma~\ref{lem:mfgle}) still holds in this general setting. In this context, the nearest neighbor of a path can be found in $O(n)$ time by iterating through the adjacency lists of both endpoints, where $n$ is the number of nodes. Using this linear search, we can easily compute the multi-fragment tour in $O(n^2)$ time with a NNC-based algorithm. It is a simpler version of Algorithm~\ref{alg:snccpath} that only has to handle hard answers and does not need any sophisticated data structures. This improves upon the natural way to implement the multi-fragment heuristic, which is to sort the $\Theta(n^2)$ edges by weight. Sorting requires $\Theta(n^2\log n)$ time. 

This is the first use of NNC in a graph-theoretical setting, but the fact of the matter is that the NNC algorithm can be used in any setting where we can find nearest neighbors efficiently.
Consider the related Steiner TSP problem~\cite{Cornuejols1985}: given a weighted, undirected graph and a set of $k$ node sites $P\subseteq V$, find a minimum-weight tour (repeated vertices and edges allowed) in $G$ that goes at least once through every site in $P$. Nodes not in $P$ do not need to be visited. For instance, $G$ could represent a road network, and the sites could represent the daily drop-off locations of a delivery truck. See~\cite{DEKOSTER2007481,ZHANG201530} for more applications.

%We also give a NNC algorithm to construct a multi-fragment tour in this setting.
Recently, Eppstein et al.~\cite{Eppstein17Latin} gave a NN structure for graphs from graph families with sublinear separators, which is the same as the class of graphs with polynomial expansion~\cite{dvorak2016}. For instance, planar graphs have $O(\sqrt{n})$-size separators\footnote{Other important families of sparse graphs with sublinear separators include $k$-planar graphs~\cite{DujEppWoo-SJDM-17}, bounded-genus graphs~\cite{gilbert1984}, minor-closed graph families~\cite{kawarabayashi2010}, and graphs that model road networks (better than, e.g., $k$-planar graphs)~\cite{eppstein2017crossing}.}. This data structure maintains a subset of nodes $P$ of a graph $G$, and, given a query node $q$ in $G$, returns the node in $P$ closest to $q$. It allows insertions and deletions to and from the set $P$. We cite their result in Lemma~\ref{lem:latin}.

\begin{lemma}[\cite{Eppstein17Latin}]\label{lem:latin}
Given an $n$-node weighted graph from a graph family with separators of size $S(n)=n^c$, with $0<c<1$, which can be constructed in $O(n)$ time, there is a dynamic\footnote{They (\cite{Eppstein17Latin}) use the term \textit{reactive} for the data structure instead of dynamic, to distinguish from other types of updates, e.g., edge insertions and deletions.} nearest-neighbor data structure requiring $O(nS(n))$ space and preprocessing time and that answers queries in $O(S(n))$ time and updates in $O(S(n)\log n)$ time.
\end{lemma}

As mentioned, one way to implement the multi-fragment heuristic is to sort the $\binom{k}{2}$ pairs of sites by increasing distances, and process them in order: for each pair, if the two sites are endpoints of separate paths, connect them.
The bottleneck is computing the distances. Running Dijkstra's algorithm from each site in a sparse graph, this takes $O(k(n\log n))$ (or $O(kn)$ in planar graphs~\cite{HENZINGER19973}). When $k$ is $\Theta(n)$, this becomes $O(n^2\log n)$. We do not know of any prior faster algorithm to compute the multi-fragment tour for Steiner TSP.

Since we have global-local equivalence, we can use the NNC algorithm to construct the multi-fragment tour in $O(P(n)+kT(n))$ time, where $P(n)$ and $T(n)$ are the preprocessing and operation time of a nearest-neighbor structure. 
Thus, using the structure from~\cite{Eppstein17Latin}, we get:

\begin{theorem}
The multi-fragment tour for the steiner TSP problem can be computed in $O(nS(n)+kS(n)\log n)$-time in weighted graphs from a graph family with separators of size $S(n)=n^c$, with $0<c<1$.
\end{theorem}

%We omit the details of the proof, as NNC is a simpler version of the SNNC algorithm (Algorithm~\ref{alg:snccpath}) that only has to handle hard answers.

Finally, in graphs of bounded treewidth, which have separators of size $O(1)$, the data structure from~\cite{Eppstein17Latin} achieves $P(n)=O(n\log n)$ and $T(n)=O(\log^2 n)$, so we can construct a multi-fragment tour in $O(n\log n+k\log^2 n)$.

}{}

\section{Motorcycle Graphs}\label{sec:motorcycle}
An important concept in geometric computing is the straight skeleton~\cite{aichholzer1996novel}. It is a tree-like structure similar to the medial axis of a polygon, but which consists of straight segments only.
Given a polygon, consider a shrinking process where each edge moves inward, at the same speed, in a direction perpendicular to itself. The straight skeleton of the polygon is the trace of the vertices through this process. Some of its applications include computing offset polygons~\cite{Eppstein1999}, medical imaging~\cite{cloppet2000}, polyhedral surface reconstruction~\cite{oliva1996,Barequet2003}, and computational origami~\cite{DEMAINE20003}. It is a standard tool in geometric computing software~\cite{cacciola04}.

The current fastest algorithms for computing straight skeletons consist of two main steps~\cite{Cheng2007,Huber2011,huber12}. The first step is to construct a motorcycle graph induced by the reflex vertices of the polygon. The second step is a lower envelope computation. With current algorithms, the first step is more expensive, but it only depends on the number of reflex vertices, $r$, which might be smaller than the total number of vertices, $n$. Thus, no step dominates the other in every instance. In this section, we discuss the first step. The second step can be done in $O(n\log n)$ time for simple polygons~\cite{bowers2014faster}, in $O(n\log n\log r)$ time for arbitrary polygons~\cite{Cheng2016}, and in $O(n\log n\log m)$ time for planar straight line graphs with $m$ connected components~\cite{bowers2014faster}.

The motorcycle graph problem can be described as follows (see Figure~\ref{fig:mc}, top)~\cite{Eppstein1999}. The input consists of $n$ points in the plane, with associated directions and speeds (the motorcycles). Consider the process where all the motorcycles start moving at the same time, in their respective directions and speeds. Motorcycles leave a trace behind that acts as a ``wall'' such that other motorcycles crash and stop if they reach it. Some motorcycles escape to infinity while others crash against the traces of other motorcycles. The motorcycle graph is the set of traces.

Most existing algorithms rely on three-dimensional ray-shooting queries. Indeed, if time is seen as the third dimension, the position of a motorcycle starting to move from $(x,y)$, at speed $s$, in the direction $(u,v)$, forms a ray (if it escapes) or a segment (if it crashes) in three dimensions, starting at $(x,y,0)$ in the direction $(u,v,1/s)$. In particular, the impassable traces left behind by the motorcycles correspond to infinite vertical ``curtains'' -- wedges or trapezoidal slabs, depending on whether they are bounded below by a ray or a segment.

Thus, ray-shooting queries help determine which trace a motorcycle would reach first, if any. Of course, the complication is that as motorcycles crash, their potential traces change. Early algorithms handle this issue by computing the crashes in chronological order~\cite{Eppstein1999,Cheng2007}. The best previously known algorithm, by Vigneron and Yan~\cite{Vigneron2014}, is the first that computes the crashes in non-chronological order.
Our NNC-based algorithm improves upon it by reducing the number of ray-shooting queries needed from $O(n\log n)$ to $3n$, and simplify significantly the required data structures. It is also non-chronological, but follows a completely new approach.

\newcommand{\figmc}{
\begin{figure}[t]
    \centering
    \includegraphics[width=.78\linewidth]{}
    \caption{\textbf{Top:} an instance input with uniform velocities and its corresponding motorcycle graph. \textbf{Bottom:} snapshots of the NNC algorithm before and after determining all the motorcycles in a NN cycle found by the chain: the NN of the motorcycle at the top, $m$, is $m'$, which is already in the chain. Note that some motorcycles in the chain have as NNs motorcycles against the traces of which they do not crash in the final output. That is expected, because these motorcycles are still undetermined (e.g., as a result of clipping the curtain of $m'$, the NN of its predecessor in the chain changes).}
    \label{fig:mc}
\end{figure}
}

\ifthenelse{\boolean{long}}{
\figmc
}{}

\subsection{Algorithm description}

In the algorithm, we distinguish between \textit{undetermined} motorcycles, for which the final location is still unknown, and \textit{determined} motorcycle, for which the final location is already known. We use a dynamic three-dimensional ray-shooting data structure. In the data structure, determined motorcycles have wedges or slabs as curtains, depending on whether they escape or not. Undetermined motorcycles have wedge curtains, as if they were to escape. Thus, curtains of undetermined motorcycles may reach points that the corresponding motorcycles never get to.

For an undetermined motorcycle $m$, we define its nearest neighbor to be the motorcycle, determined or not, against which $m$ would crash next according to the set of curtains in the data structure. Motorcycles that escape may have no NN. Finding the NN of a motorcycle $m$ corresponds to one ray-shooting query. Note that $m$ may actually not crash against the trace of its NN, $m'$, if $m'$ is undetermined and happens to crash early. On the other hand, if $m'$ is determined, then $m$ definitely crashes into its trace.

We begin with all motorcycles as undetermined. Our main structure is a chain (a stack) of undetermined motorcycles such that each motorcycle is the NN of the previous one.
In contrast to typical applications of the NNC algorithm, here ``proximity'' is not symmetric: there may be no ``mutually nearest-neighbors''. In fact, the only case where two motorcycles are MNNs is the degenerate case where two motorcycles reach the same point simultaneously. That said, mutually nearest neighbors have an appropriate analogous in the asymmetric setting: \textit{nearest-neighbor cycles}, $m_1\rightarrow m_2\rightarrow\cdots\rightarrow m_k\rightarrow m_1$. Our algorithm relies on the following key observation: if we find a nearest-neighbor cycle of undetermined motorcycles, then each motorcycle in the cycle crashes into the next motorcycle's trace. This is easy to see from the definition of nearest neighbors, as it means that no motorcycle outside the cycle would ``interrupt'' the cycle by making one of them crash early. Thus, if we find such a cycle, we can determine all the motorcycles in the cycle at once (this can be seen as a type of chronological global-local equivalence).

Starting from an undetermined motorcycle, following a chain of nearest neighbors inevitably leads to \textit{(a)} a motorcycle that escapes, \textit{(b)} a motorcycle that is determined, or \textit{(c)} a nearest-neighbor cycle. In all three cases, this allows us to determine the motorcycle at the top of the chain, or, in Case (c), all the motorcycles in the cycle. See Figure~\ref{fig:mc}, bottom. Further, note that we only modify the curtain of the newly determined motorcycle(s). Thus, if we determine the motorcycle $m$ at the top of the chain, only the NN of the second-to-last motorcycle in the chain may have changed, and similarly in the case of the cycle. Consequently, the rest of the chain remains consistent. Algorithm~\ref{alg:mc} shows the full pseudo code.

\begin{algorithm}
\caption{Nearest-neighbor-chain algorithm for motorcycle graphs.}
\label{alg:mc}
\begin{algorithmic}
\State Initialize a ray-shooting data structure with the wedges for all the motorcycles (according to their status as undetermined).
\State Initialize an empty stack (the chain).
\While{there are undetermined motorcycles}
    \State If the chain is empty, add an arbitrary undetermined motorcycle to it.
    \parState{Let $m$ be the motorcycle at the top of the chain. Do a query for the NN of $m$. If there is any, denote it by $m'$. There are four cases:
        \begin{enumerate}[(a)]
            \item $m$ does not have a NN: $m$ escapes. Remove $m$ from the chain and mark it as determined (its curtain does not change).
            \item $m'$ is determined (i.e., its curtain is final): $m$ crashes into it. Clip the curtain of $m$ into a slab, mark $m$ as determined, and remove $m$ and the previous motorcycle from the chain. (We remove the second-to-last motorcycle because it had $m$ as NN, and after clipping $m$'s curtain, the previous motorcycle may have a different NN.)
            \item $m'$ is undetermined and already in the chain: then, all the motorcycles in the chain, from $m'$ up to $m$ (which is the last one) form a nearest-neighbor cycle, and they will all crash against the trace of the next motorcycle in the cycle. We clip all their curtains, mark them all as determined, and remove them and the motorcycle immediately before $m'$ from the chain.
            \item $m'$ is undetermined and not in the chain: add $m'$ to the chain.
        \end{enumerate}}
\EndWhile
\end{algorithmic}
\end{algorithm}

\subsection{Analysis}
Clearly, every motorcycle eventually becomes determined, and we have already argued in the algorithm description that irrespective of whether it becomes determined through Case (a), (b), or (c), its final position is correct. Thus, we move on to the complexity analysis. Each ``clipping'' update can be seen as an update to the ray-shooting data structure: we remove the wedge and add the slab.

\begin{theorem}
Algorithm~\ref{alg:mc} computes the motorcycle graph in time $O(P(n)+nT(n))$, where $P(n)$ and $T(n)$ are the preprocessing time and operation time (maximum between query and update) of a dynamic, three-dimensional ray-shooting data structure.
\end{theorem}

\begin{proof}
Each iteration of the algorithm makes one ray-shooting query. At each iteration, either a motorcycle is added to the chain (Case~(d)), or at least one motorcycle is determined (Cases~(a---c)). 

Motorcycles begin as undetermined and, once they become determined, they remain so. This bounds the number of Cases~(a---c) to $n$. In Cases~(b) and~(c), one undetermined motorcycle may be removed from the chain. Thus, the number of undetermined motorcycles removed from the chain is at most $n$. It follows that Case~(d) happens at most $2n$ times.

Overall, the algorithm takes at most $3n$ iterations, so it needs no more than $3n$ ray-shooting queries and at most $n$ ``clipping'' updates where we change a triangular curtain into a slab. It follows that the runtime is $O(P(n)+nT(n))$.
\end{proof}
\ifthenelse{\boolean{long}}{In terms of space, we only need a linear amount besides the space required by the data structure.}{}

The previous best known algorithm runs in time $O(P(n)+n(T(n)+\log n)\log n)$~\cite{Vigneron2014}. Besides ray-shooting queries, it also uses range searching data structures, which do not increase the asymptotic runtime but make the algorithm more complex.

Agarwal and Matou\v{s}ek~\cite{agarwal93} give a ray-shooting data structure for curtains in $\R^3$ which achieves $P(n)=O(n^{4/3+\eps})$ and $T(n)=O(n^{1/3+\eps})$ for any $\eps>0$. Using this structure, both our algorithm and the algorithm of Vigneron and Yan~\cite{Vigneron2014} run in $O(n^{4/3+\eps})$ time for any $\eps>0$. If both algorithms use the same $\eps$ in the ray-shooting data structure, then our algorithm is asymptotically faster by a logarithmic factor.

\ifthenelse{\boolean{long}}{
\subsection{Special cases and remarks}\label{sec:mgremarks}

Consider the case where all motorcycles start from the boundary of a simple polygon with $O(n)$ vertices, move through the inside of the polygon, and also crash against the edges of the polygon. In this setting, the motorcycle trajectories form a connected planar subdivision. There are dynamic ray-shooting queries for connected planar subdivisions that achieve $T(n)=O(\log^2 n)$~\cite{Goodrich1993}. Vigneron and Yan used this data structure in their algorithm to get a $O(n\log^3 n)$-time algorithm for this case~\cite{Vigneron2014}. Our algorithm brings this down to $O(n\log^2 n)$. Furthermore, their other data structures require that coordinates have $O(\log n)$ bits, while we do not have this requirement.

Vigneron and Yan also consider the case where motorcycles can only go in $C$ different directions. They show how to reduce $T(n)$ to $\min(O(C\log^2 n, C^2\log n)$, leading to a $O(n\log^2 n C\min(\log n, C))$ algorithm for motorcycle graphs in this setting. Using the same data structures, the NNC algorithm improves the runtime to $O(n\log n C\min(\log n, C))$.

A remark on the use of our algorithm for computing straight skeletons: degenerate polygons where two shrinking reflex vertices collide gives rise to a motorcycle graph problem where two motorcycles collide head on. To compute the straight skeleton, a new motorcycle should emerge from the collision. Our algorithm does not work if new motorcycles are added dynamically (such a motorcycle could, e.g., disrupt a NN cycle already determined), so it cannot be used in the computation of straight skeletons of degenerate polygons.

As a side note, the NNC algorithm for motorcycle graphs is reminiscent of Gale's top trading cycle algorithm~\cite{shapley1974cores} from the field of economics. That algorithm also works by finding ``first-choice'' cycles. We are not aware of whether they use a NNC-type algorithm to find such cycles; if they do not, they certainly can; if they do, then at least our use is new in the context of motorcycle graphs.

}{
See Appendix~\ref{sec:mgremarks} for additional results in special cases and some remarks.
}

\ifthenelse{\boolean{long}}{
\section{Stable Matching Problems}\label{sec:kattr}

We introduce the \textit{narcissistic k-attribute stable matching} problem, a special case of \textit{k-attribute stable matching}, and show that it belongs to the class of \textit{symmetric stable matching} problems. We use this fact to give an efficient NNC-type algorithm for it.

\subparagraph*{Stable matching.}
The stable matching problem studies how to match two sets of agents in a market where each agent has its own preferences about the agents of the other set 
in a ``stable'' manner.
Some of its applications include matching hospitals and residents~\cite{thematch} and on-line advertisement auctions~\cite{Aggarwal2009}.
It was originally formulated  by Gale and Shapley~\cite{gale62}
in the context of establishing marriages between $n$ men and $n$ women,
where each man ranks the women and the women rank the men.
A matching between the men and women is \emph{stable}
if there is no \emph{blocking pair}: 
a man and woman who prefer each other over their assigned choices. 

Gale and Shapley~\cite{gale62} showed that a stable solution 
exists for any set of preferences (and it might not be unique), and presented the deferred-acceptance algorithm, which finds a stable matching in
$O(n^2)$ time.

\subsection{Restricted models}
For arbitrary preference lists, Gale--Shapley's deferred-acceptance algorithm is worst-case optimal, as storing all the preferences already requires $\Theta(n^2)$ space (quadratic lower bounds are known also for ``simpler'' questions, like verifying stability of a \emph{given} matching~\cite{Gonczarowski2015}). This inspired work on finding subquadratic algorithms in restricted settings where preferences can be specified in subquadratic space. Such models are collectively called succinct stable matching~\cite{moeller2016}. We introduce a new model which is a special case of the following three models (of which none is a special case of another):
\begin{description}
\item[$k$-attribute model~\cite{bhatnagar2008}.] Each agent $p$ has a vector $\vec{p_a}$ of $k$ numerical attributes, and a vector $\vec{p_w}$ of $k$ weights according to how much $p$ values each attribute in a match. Then, each agent $p$ ranks the agents in the other set according to the objective function $f_p(q)=\vec{p_w}\cdot\vec{q_a}$, the linear combination of the attributes of $q$ according to the weights of $p$.
\item[Narcissistic stable matching.] This term is used to describe models where the preferences of each agent reflect their own qualities in some way (e.g., in~\cite{chennumber,moeller2016}).
\item[Symmetric stable matching~\cite{eppstein2017_2}.] Consider the setting where each agent $p$ has an arbitrary objective function, $f_p(q)$, and ranks the agents according to $f_p(q)$ (note that any set of preference lists can be modeled in this way). The preferences are called \textit{symmetric} if for any two agents $p,q$ in different sets, $f_p(q)=f_q(p)$.
\end{description}

In this paper, we consider the natural narcissistic interpretation of the $k$-attribute model, where $\vec{p_a}=\vec{p_w}$. That is, each agent weighs each attribute according to its own value in that attribute. To illustrate this model, consider a centralized dating service where $k$ attributes are known for each person, such as income, intelligence, and so on. In the general $k$-attribute model, each person assigns weights to the attributes according to their preferences. The narcissistic assumption that $\vec{p_a}=\vec{p_w}$ implies that someone with, say, a high income, values income more than someone with a relatively smaller income.

We make a general position assumption that there are no ties in the preference list of each agent. 
In addition, in this model each agent is uniquely determined by its attribute vector, so we do not distinguish between the agents themselves and their $k$-dimensional vectors. We obtain the following formal problem.

\begin{definition}[Narcissistic $k$-attribute stable matching problem]
Find a stable matching between two sets of $n$ vectors in $\R^k$, where a vector $\vec{p}$ prefers $\vec{q}$ over $\vec{q'}$ if and only if $\vec{p}\cdot \vec{q}>\vec{p}\cdot \vec{q'}$.
\end{definition}

We give a $O(n^{2-4/(k+2+\eps)})$-time algorithm for the problem.
Without the narcissistic assumption, the $k$-attribute model becomes less tractable: Künnemann et al.~\cite{moeller2016} showed that no strongly subquadratic-time algorithm exists if $k=\omega(\log n)$ assuming the Strong Exponential Time Hypothesis, even if the weights and attributes take Boolean values. (Similarly to us, \cite{moeller2016} also studied some restricted cases and presented a $O(C^{2k}n(k+\log n))$-time algorithm for the case where attributes and weights may have only $C$ different values and
a $\tilde{O}(n^{2-1/\lfloor k/2\rfloor})$-time algorithm for the asymmetric case where one of the sets has a single attribute and the other has $k$.\footnote{The $\tilde{O}$ notation ignores logarithmic factors.}) 

It is easy to see that our setting is symmetric: since $\vec{p_a}=\vec{p_w}$, for any two agents $p,q$ we have $f_p(q)=\vec{p_w}\cdot\vec{q_a}=\vec{p_a}\cdot\vec{q_w}=f_q(p)$. Eppstein et al.~\cite{eppstein2017_2} showed that in symmetric models, the NNC algorithm can be used.
Specifically, they introduced symmetric stable matching as an abstraction of the case where the agents are points in a metric space and they rank the agents in the other set by proximity~\cite{arkin2009geometric}. They showed that if preferences are symmetric, the problem has special properties: there is a unique stable matching and it can be found by repeatedly matching the two unmatched elements with the highest objective function value.
In addition, they showed the global-local equivalence: it suffices to match any two elements who have each other as first choice, called \textit{soul mates}, which are
mutually nearest neighbors if the preferences are distance-based. 

The algorithm of Eppstein et al.~\cite{eppstein2017,eppstein2017_2} for symmetric stable matching was the first use of NNC outside of hierarchical clustering. The algorithm is a bichromatic version of the NNC algorithm, where each individual in the chain is followed by its first choice among the unmatched individuals in the other set. Following such a chain inevitably leads to soul mates, which are then matched and removed permanently (here, the symmetry assumption is the key to avoid cycles in the chain). 
The execution relies on a dynamic \textit{first-choice data structure}, which maintains the elements in one set, and, given a query element $q$ from the other set, returns the first choice of $q$ among the elements in the structure. The final result is as follows:

\begin{lemma}[\cite{eppstein2017_2}]\label{lem:nncforssm}
Given a first-choice data structure with $P(n)$ preprocessing time and $T(n)$ operation time (maximum between query and update), a symmetric stable matching problem can be solved in $O(P(n)+nT(n))$.
\end{lemma}

\subsection{Our algorithm for narcissistic \texorpdfstring{$k$}{k}-attribute stable matching}

Adapting the NNC algorithm from~\cite{eppstein2017_2} to our model simply requires using an appropriate structure for first-choice queries. In our case, the first-choice data structure should maintain a set of vectors, and, given a query vector, return the vector maximizing the dot product with the query vector.
In the dual, this becomes ray shooting: each vector becomes a hyperplane, and a query asks for the first hyperplane hit by a vertical ray from the query point.
We use the data structure from~\cite{Matousek1992}, the runtime of which is captured in the following lemma (see~\cite{pankaj99range} for a summary of ray-shooting data structures).

\begin{lemma} \emph{(\cite[Theorem~1.5]{Matousek1992}).} \label{lem:rayshooting}
Let $\eps>0$ be a constant, $k\geq 4$ a fixed dimension, and $m$ a parameter with $n\leq m\leq n^{\lfloor k/2\rfloor}$. Then, there is a dynamic data structure for ray-shooting queries with $O(m^{1+\eps})$ space and preprocessing time, $O(m^{1+\eps}/n)$ update time, and $O(\frac{n}{m^{1/\lfloor k/2\rfloor}}\log n)$ query time.
\end{lemma}

\begin{theorem}\label{thm:kattr}
For any $\eps>0$,
the narcissistic $k$-attribute stable matching problem can be solved in $O(n\log n)$ time for $k=2$, $O(n^{4/3+\eps})$ time for $k=3$, and $O(n^{2-4/(k(1+\eps)+2)})$ time for $k\geq 4$. %according to the paper "subquadratic algorithms for succinct stable matching", the k=3 case should be solvable in soft-O(n). However, I have not been able to track down this result.
\end{theorem}

\begin{proof}
Since the problem is symmetric, it can be solved in $O(P(n)+nT(n))$ time, given a dynamic data structure for ray-shooting queries with $P(n)$ preprocessing time and $T(n)$ operation time (Lemma~\ref{lem:nncforssm}). 

For $k\geq 4$, using the data structure for ray-shooting queries from~\cite{Matousek1992} (Lemma~\ref{lem:rayshooting}) results in a runtime of $O(m^{1+\eps}+\frac{n^2\log n}{m^{1/\lfloor k/2\rfloor}})$ for any $\eps>0$. The optimal runtime is achieved when the parameter $m$ is chosen to balance the two terms, i.e., so that $m^{1+\eps}=\frac{n^2\log n}{m^{1/\lfloor k/2\rfloor}}$. This gives $m=(n^2\log n)^{1/(1+\eps+1/\lfloor k/2\rfloor)}$. For the sake of obtaining a simple asymptotic expression, we set $m$ to $(n^2\log n)^{1/(1+\eps+2/k)}$ (which is the same for even $k$, and bigger for odd $k$). Then, the $O(m^{1+\eps})$ term dominates. Also note that if $\eps<1-2/k$, this value of $m$ is between $n$ and $n^{\lfloor k/2\rfloor}$, so the condition in Lemma~\ref{lem:rayshooting} is satisfied.

Thus, the problem can be solved in $O(m^{1+\eps})=O((n^2\log n)^{(1+\eps)/(1+\eps+2/k)})$, which further simplifies to the claimed runtime of $O(n^{2-4/(k(1+\eps')+2)})$ (where $\eps'$ needs to satisfy $\eps'>\eps$).
For $k=3$, we use the same data structure, but raising the problem to four dimensions, so that Lemma~\ref{lem:rayshooting} applies. For $k=2$, see Lemma~\ref{lem:2attr}.
\end{proof}

Incidentally, the value for $m$ used in Theorem~\ref{thm:kattr} also improves the algorithm by Künnemann et al.~\cite{moeller2016} for the one-sided $k$-attribute stable matching problem \cite[Theorem~2]{moeller2016}, which also depends on the use of this data structure. The improvement is from $\tilde{O}(n^{2-1/(k/2)})$ to $O(n^{2-1/(k(1+\eps)/4+1/2)})$.
Similar balancing of preprocessing and query times in~\cite[Corollary~5.2]{Mato1992trees} also improves the time to verify stability of a \emph{given} matching in the (2-sided) $k$-attribute stable matching model~\cite[Section~5.1]{moeller2016} for constant $k$; the improvement is from $\tilde{O}(n^{2-1/(2k)})$ to $O(n^{2-1/(k/2+1/2+\eps)})$ for any $\eps>0$.

\subsection{The 2-attribute case}
In this special case, we can design a simple first-choice data structure with $P(n)=O(n\log n)$ preprocessing time and $T(n)=O(\log n)$ operation time.
Note that, for a vector $\vec{p}$ in $\R^2$, all the points along a line perpendicular to $\vec{p}$ are equally preferred, i.e., have the same dot product with $\vec{p}$ (because their projections onto the supporting line of $\vec{p}$ are the same). In fact, the preference list for $\vec{p}$ corresponds to the order in which a line perpendicular to $\vec{p}$ encounters the vectors in the other set as it moves in the direction opposite from $\vec{p}$ (see Figure~\ref{fig:kattr}, left).
We get the following lemma (where the vectors in one set are interpreted as points).

\begin{figure}
\centering
\includegraphics[width=.8\linewidth]{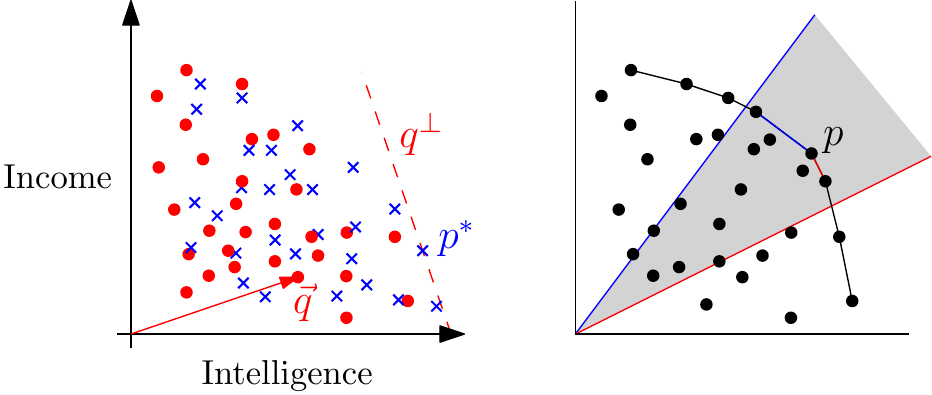}
\caption{\textbf{Left:} an instance of narcissistic $2$-attribute stable matching. The two sets of vectors are represented as red dots and blue crosses, respectively, in a plane where the axes correspond to the two attributes. For a specific red vector, $\vec{q}$, its first choice in the other set (the vector maximizing the dot product), $p^*$, is shown. The dashed line passing through $p^*$ is perpendicular to $\vec{q}$. \textbf{Right:} the point $p$ is the point among the black points maximizing $q\cdot p$ for all the points $q$ in the gray wedge. The wedge is delimited by two rays starting at the origin and perpendicular to the two edges of the convex hull incident to $p$.}
\label{fig:kattr}
\end{figure}

\begin{comment}
\begin{figure}
\centering
\includegraphics[width=.7\linewidth]{kattr}
\caption{An instance of narcissistic $k$-attribute stable matching problem with $k=2$. The two sets are represented as points in a plane where the axis correspond to the two attributes. For two specific men $p,p'$, we can see their first choices, $q$ and $q'$, respectively. The dashed lines passing through their first choices are perpendicular to their vectors.}
\label{fig:kattr}
\end{figure}

\begin{figure}
\centering
\includegraphics[width=.3\linewidth]{}
\caption{The point $p$ is the point in $P$ maximizing $q\cdot p$ for all the points $q$ in the gray wedge. The wedge is delimited by the lines perpendicular to the edges of the convex hull incident to $p$.}
\label{fig:convexhull}
\end{figure}
\end{comment}

\begin{lemma}\label{lem:ch}
Given a point set $P$ and a vector $\vec{q}$, in $\R^2$, the point $p^*$ in $P$ maximizing $\vec{q}\cdot p^*$ is in the convex hull of $P$.  
\end{lemma}
\begin{proof}
Consider a line perpendicular to $\vec{q}$. Move this line
in the direction of $\vec{q}$, until all points in $P$ lie on the same side of it (behind it). Note that any line orthogonal to $\vec{q}$ has the property that all points lying on the line have the same dot product with $\vec{q}$.
The point $p^*$ is the last point in $P$ to touch the line, since moving the line in the opposite direction from $\vec{q}$ decreases the dot product of $\vec{q}$ with any point on the line (and by the general position assumption, it is unique). Clearly, $p^*$ is in the convex hull.
\end{proof}

Our first-choice data structure is a semi-dynamic convex hull data structure, where deletions are allowed but not insertions~\cite{Hershberger1992}.
We handle queries as in Lemma~\ref{lem:findfirstchoice}. 

\begin{lemma}\label{lem:findfirstchoice}
Given the ordered list of points along the convex hull of a point set $P$, and a query vector $\vec{q}$, we can find the point $p^*$ in $P$ maximizing $\vec{q}\cdot p^*$ in $O(\log n)$ time, where $n$ is the number of points in the convex hull.
\end{lemma}

\begin{proof}
By Lemma~\ref{lem:ch}, the point $p^*$ is in the convex hull.
For ease of exposition, assume that all the points in $P$ and $\vec{q}$ have positive coordinates (the alternative cases are similar).
Then, $p^*$ lies in the top-right section of the convex hull (the section from the highest point to the rightmost point, in clockwise order).

Note that points along the top-right convex hull are ordered by their $y$-coordinate, so, we say \emph{above} and \emph{below} to describe the relative positions of points in it.
Each point $p$ in the top-right convex hull is the point in $P$ maximizing $p\cdot \vec{q'}$ for all the vectors $\vec{q'}$ in an infinite wedge, as depicted in Figure~\ref{fig:kattr}, right. The wedge contains all the vectors $\vec{q'}$ whose perpendicular line touches $p$ last when moving in the direction of $\vec{q'}$, so the edges of the wedge are perpendicular to the edges of the convex hull incident to $p$. Thus, by looking at the neighbors of $p$ along the convex hull, we can calculate this wedge and know whether $\vec{q}$ is in the wedge for $p$, below it, or above it. Based on this, we discern whether the first choice of $\vec{q}$ is $p$ itself or above or below it. Thus, we can do binary search for $p^*$ in $O(\log n)$ time.
\end{proof}

\begin{lemma}\label{lem:2attr}
The narcissistic 2-attribute stable matching problem can be solved in $O(n\log n)$ time.
\end{lemma}

\begin{proof}
We can use the NNC algorithm from~\cite{eppstein2017_2} coupled with a first-choice data structure which is a semi-dynamic convex hull data structure. Updating the convex-hull can be done in $O(n\log n)$ time throughout the algorithm~\cite{Hershberger1992}. Queries are answered in $O(\log n)$ time (Lemma~\ref{lem:findfirstchoice}). Thus, the total running time is $O(n\log n)$. 
\end{proof}

\section{Server Cover}\label{sec:clientcover}
\documentclass[main.tex]{subfiles}

Geometric \textit{coverage} problems deal with finding optimal configurations of a set of geometric shapes that contain or ``cover'' another set of objects (for instance, see~\cite{agarwal2014near,bronnimann1995almost,PedersenWang18}).
In this section, we propose an NNC-type algorithm for a problem in this category. We use NNC to speed up a greedy algorithm for a one-dimensional version of a \textit{server cover} problem: given the locations of $n$ clients and $m$ servers, which can be seen as houses and telecommunication towers, the goal is to assign a ``signal stregth'' to each communication tower so that they reach all the houses, minimizing the cost of transmitting the signals.

Formally, we are given two sets of points in $\R^\delta$, $S$ (servers) and $C$ (clients). The problem is to assign a radius, $r_i$, to a disk centered at each server $s_i$ in $S$, so that every client is contained in at least one disk. The optimization function to minimize is $\sum r_i^\alpha$ for some parameter $\alpha>0$. The values $\alpha=1$ and $\alpha=2$ are of special interest, as they correspond to minimizing the sum of radii and areas (in 2D), respectively.

\subsection{Related work}
Table~\ref{tab:cover} gives an overview of exact and approximation algorithms for the server cover problem. It shows that when either the dimension $\delta$ or $\alpha$ are larger than $1$, there is a steep increase in complexity.
We focus on the case with $\delta=1$ and $\alpha = 1$, which has received significant attention because it gives insight into the problem in higher dimensions.

Server coverage was first considered in the one-dimensional setting by Lev-Tov and Peleg~\cite{LevTov05}. They gave an $O((n + m)^3)$-time dynamic-programming algorithm for the $\alpha = 1$ case, where $n$ is the number of clients and $m$ is the number of servers. They also gave a linear-time 4-approximation (assuming a sorted input).
The runtime of the exact algorithm was improved to $O((n + m)^2)$ by Biniaz et al.~\cite{BiniazApprox}.
In the approximation setting, Alt et al.~\cite{carrots} gave a linear-time 3-approximation and an $O(m + n\log m)$-time $2$-approximation (also assuming a sorted input).
Using NNC, we improve this to a linear-time $2$-approximation algorithm under the same assumption that the input is sorted. 
\begin{table}[t]
    \begin{center}
            \begin{tabular}{c|c|c|c}
         \textbf{Dim.} & $\boldsymbol{\alpha}$       & \textbf{Approximation ratio} & \textbf{Complexity} \\\hline
         2D   & $\alpha > 1$   & Exact           & NP-hard~\cite{carrots}  \\\hline
         2D   & $\alpha = 1$   & Exact$^*$       & $O((n+m)^{881}T(n+m))$~\cite{GibsonClustering} \\
              &                & $1+\eps$ & $O((n+m)^{881}T(n+m))$~\cite{GibsonClustering} \\
              &                & $(1 + 6/k)$     & $O(k^2(nm)^{\gamma + 2})$~\cite{LevTov05} \\\hline
         1D   & $\alpha\geq 1$ & Exact           & Polynomial (high complexity)~\cite{Bilo2005} \\\hline
         1D   & $\alpha = 1$   & Exact           & $O((n + m)^2)$~\cite{BiniazApprox} \\
              &                & 3               & $O(n + m)$~\cite{carrots} \\
              &                & 2               & $O(m + n\log m)$~\cite{carrots} \\
              &                & 2               & $O(n + m)$ (this paper)
    \end{tabular}
        \end{center}
        \caption{Summary of best known results on the server cover problem. In the table: $n$ is the number of clients, $m$ is the number of servers, $\eps>0$ is an arbitrarily small constant, $k>1$ is an arbitrary integer parameter, $\gamma>0$ is shown to be a constant, and $T(n)$ is the cost of comparing the cost of two sets of disks in polynomial time, which requires comparing sums of square roots to compute exactly. $^*$The exact algorithm of~\cite{GibsonClustering} is under the assumption that $T(n)$ can be computed, which depends on the computational model.}
\label{tab:cover}
\end{table}

\subsection{Global-local equivalence in server cover}
The $O(m + n\log m)$-time $2$-approximation by Alt et al.~\cite{carrots} can be described as follows: start with disks (which, in 1D, are intervals) of radius $0$, and, at each step, make the smallest disk growth which covers a new client. If we define the distance $d(c,s)$ between a client $c$ and a server $s$ with a disk with radius $r\geq 0$ as the distance between $c$ and the closest boundary of the server's disk, the process can be described as repeatedly finding the closest uncovered client--server pair and growing the server's disk up to the client. Under this view, there is a natural notion of MNNs: an uncovered client $c$ and a server $s$ such that $d(c,s)$ is the smallest among all the distances involving $c$ and $s$.

However, Figure~\ref{fig:nogle} illustrates that this problem does not satisfy global-local equivalence: matching MNNs does not yield the same result as matching the closest pair. Furthermore, it shows that matching MNNs loses the 2-approximation guarantee.
We nevertheless use NNC to achieve a 2-approximation, which requires enhancing the algorithm so that it does not simply match MNNs. This shows that NNC may be useful even in problems where global-local equivalence does not hold.

\begin{figure}[t]
    \centering
    \includegraphics[width=0.99\textwidth]{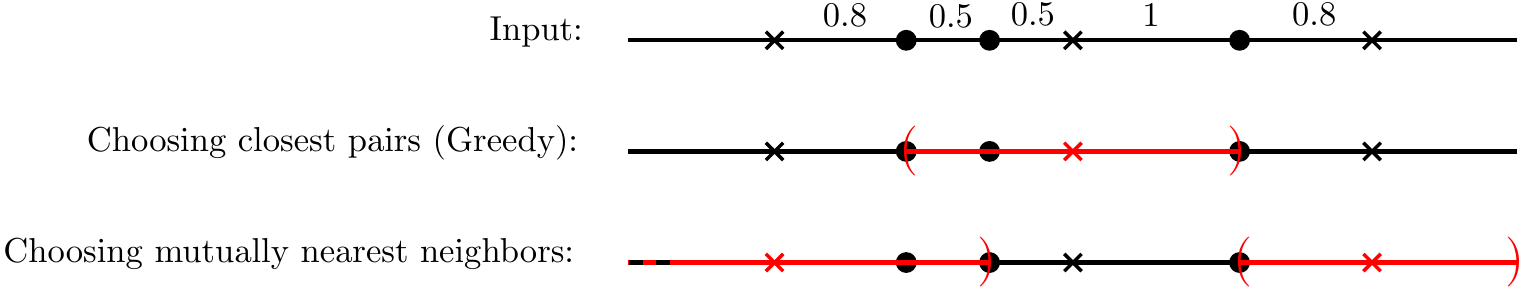}
    \caption{An instance where choosing MNNs in a specific order does not result in the same solution as choosing closest pairs (servers are crosses, clients are dots). Furthermore, note that the cost of the solution choosing MNNs, $2.1$, is not within a factor 2 of the optimal cost, $1$.}
    \label{fig:nogle}
\end{figure}

\subsection{Linear-time 2-approximation in 1D}

The algorithm takes a list of $n$ clients and $m$ servers ordered left-to-right, and outputs a radius for each server (which might be $0$).
In the algorithm, we group clients and servers into clusters. Each element starts as a base cluster, and we repeatedly merge them until there is a single cluster left. We distinguish between client clusters, consisting of a set of still uncovered clients, and server clusters, consisting of servers and covered clients. Clusters span intervals (as defined below). The distance $d(a,b)$ between clusters is defined as the distance between the closest endpoints of the clusters' intervals. We begin by describing the merging operation based on the cluster types. 
\begin{itemize}
    \item We merge client clusters into larger client clusters. All the clients in a cluster are eventually covered together, so we only need to keep track of the left-most one, $p$ and right-most one, $q$; thus, we represent the client cluster with the interval $[p,q]$. Each client $p$ starts as a cluster $[p,p]$. Two client clusters $[p,q]$ and $[p',q']$ (which in the algorithm never overlap), with $q<p'$ are merged into a client cluster $[p,q']$.
    \item We merge server clusters into larger server clusters. Of all the servers in a cluster, only the ones with disks reaching furthest to the left and to the right may cover new clients. Let these servers be $s_l$ and $s_r$, respectively (which might be the same), let $l$ be the left-most point covered by $s_l$, and $r$ the right-most point covered by $s_r$. Then, all the information we need about a server cluster is $([l,r],s_l,s_r)$. Note that $l \leq s_l \leq s_r \leq r$. Each server $s$ starts as a cluster $([s,s],s,s)$. To merge two server clusters $([p,q],s_p,s_q)$ and $([p',q'],s_{p'},s_{q'})$ (which may overlap), let $l^*=\min{(p,p')}$ and $r^*=\max{(q,q')}$. Replace both by a server cluster $([l^*,r^*],s_{l^*},s_{r^*})$. Retain the identities only of the two servers whose boundaries extend furthest left ($s_{l^*}$) and right ($s_{r^*}$).
    \item Merging a client cluster $[p,q]$ and a server cluster $([p',q'],s_{p'},s_{q'})$ (which may overlap) into a new server cluster involves covering all the clients in the cluster by $s\in\{s_{p'},s_{q'}\}$, whichever is cheaper.
    That is, Let $d^*$ be the new radius of the disk of $s$ after it grows to cover $[p,q]$; %(it may not need to grow at all, if the merged client cluster is already contained in $[p',q']$). 
    we merge the client cluster and the server cluster into a server cluster $([l^*,r^*], s_{l^*}, s_{r^*})$, where $l^*=\min{(p',s-d^*)}$, $r^*=\max{(q',s+d^*)}$, and $s_{l^*}$ (resp. $s_{r^*}$) is the server among $s$ and $s_{p'}$ (resp. $s$ and $s_{q'}$) with the leftmost (resp. rightmost) extending disk.
\end{itemize}

The algorithm works by building a chain (a stack) of clusters ordered from left to right.
The following invariant holds at the beginning of each iteration: no two clusters overlap, the chain contains a prefix of the list of clusters, and the distance between successive clusters in the chain decreases. In the pseudocode (Algorithm~\ref{alg:clientcover}) we use $a\cup b$ to denote the cluster resulting from merging clusters $a$ and $b$, and $C(s)$ to denote the cluster containing a server $s$.

\begin{algorithm}[t]
\caption{Nearest-neighbor chain algorithm for 1D server cover with $\alpha=1$}
\label{alg:clientcover}
\begin{algorithmic}
\State Initialize the base client clusters and server clusters.
\State Initialize a stack (the chain) with the leftmost cluster.
\While{there is more than one cluster}
    \State Let $a$ be the cluster at the top of the chain, and $b$ its nearest neighbor.
    \If{$b$ is to the right of $a$}
        \State Add $b$ to the chain.
    \Else 
        \parState{Merge $a$ and $b$, remove them from the chain, and add $a\cup b$.}
        \If{a server $s$ grows to cover a client cluster $c$ as a result of the merge}
            \parState{(Note that the disk of $s$ grows on both sides of $s$. Thus, $C(s)$, which is $a\cup b$ at this point, might contain or overlap other clusters on the opposite side of $c$, as illustrated in Figure~\ref{fig:clientmerge}.)}
            \While{$C(s)$ is not disjoint from other clusters}
                \State Traverse the list of clusters from $C(s)$ in the opposite direction from $c$.
                \While{the next cluster, $e$, is contained in, or overlaps $C(s)$}
                    \parState{Merge $e$ and $C(s)$, remove them from the chain ($e$ might not be in the chain, if it is to the right of $s$, in which case only $C(s)$ is removed) and add the merged cluster to the chain.}
                \EndWhile
                \If{the last cluster $e$ partially overlaps $C(s)$}
                    \parState{Set $c$ to $e$. (Their merge may cause the disk of $s$ to expand on the opposite side from $e$, so $C(s)$ again might overlap with clusters on the opposite side.)}
                \Else
                    \State Break out of the while loop; $C(s)$ is disjoint from other clusters.
                \EndIf
            \EndWhile
        \EndIf
    \EndIf
\EndWhile
\end{algorithmic}
\end{algorithm}

\begin{figure}
    \centering
    \includegraphics[width=0.9\textwidth]{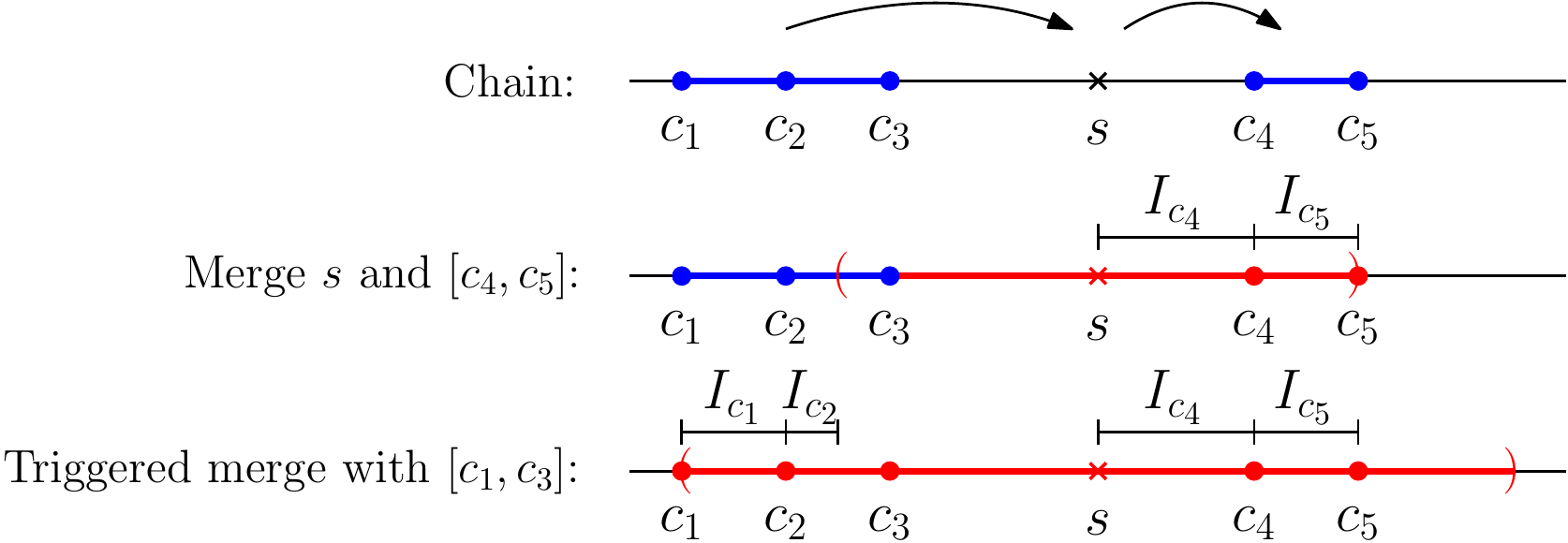}
    \caption{Illustration of the case where merging a server cluster and a client cluster causes the server cluster to expand on the opposite side and partially intersect another client cluster. This triggers another merge, causing the server cluster to expand again. The coverage intervals $I_c$ defined in the analysis are also shown.}
    \label{fig:clientmerge}
\end{figure}

\subsubsection{Correctness}
At the end of the algorithm, all the clusters have been merged into one, which is a server cluster (as long as there is one). Thus every client cluster has been merged with a server cluster, which means that some server grew its radius to cover the clients (or they became covered indirectly through an expansion). Thus, the output is a valid solution. We turn our attention to the analysis of the runtime and of the $2$-approximation factor. Throughout, we make an assumption that there are no ties between distances (or that they are broken consistently).

\begin{lemma}
Algorithm~\ref{alg:clientcover} runs in $O(n+m)$ time, assuming the input is given in sorted order.
\end{lemma}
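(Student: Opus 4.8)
The proof is an amortized analysis in the style of the nearest-neighbor-chain method, with extra bookkeeping for the cascading merges. First I would fix the data structures. Since the input is already sorted, store all clusters in a doubly linked list ordered left to right, each node recording the cluster's interval $[\ell,r]$ and, for a server cluster, the identities $s_\ell$ and $s_r$; the chain is an ordinary stack holding (pointers to) clusters. Building the base clusters and this list from the sorted input costs $O(n+m)$. Every primitive the algorithm uses is then $O(1)$: each of the three merge rules reads and rewrites a constant amount of data and recomputes a constant number of interval endpoints; pushing/popping and the overlap/containment tests between $C(s)$ and an adjacent cluster are trivially $O(1)$; and, crucially, finding the nearest neighbor $b$ of the top cluster $a$ of the chain is $O(1)$, because on a line the nearest neighbor of a pairwise-disjoint interval is one of its two list-neighbors, and by the prefix invariant the left one is precisely the element of the chain just below $a$ while the right one is $a$'s successor in the list, so it suffices to compare two distances.

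Next I would count merges. Every merge executed anywhere --- the ``primary'' merge in the \textbf{else} branch and every merge inside the cascade loops --- decreases the number of clusters by exactly one. Starting from $N := n+m$ base clusters and ending with a single cluster, there are therefore exactly $N-1$ merges in total. Now bound the number of iterations of the outer \texttt{while} loop by an amortization with potential $\Phi :=$ current stack size; $\Phi = 1$ at the start and $\Phi = 1$ at the end. A push raises $\Phi$ by $1$; a merge changes $\Phi$ by $-1$ (popping two chain elements --- $a$ and $b$, or $C(s)$ and a cluster $e$ lying to its left) or by $0$ (a cascade merge with $e$ to the right of $s$, hence not on the chain: pop $C(s)$, push the merged cluster); nothing else changes $\Phi$. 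Hence the number of pushes equals the total decrease of $\Phi$, which is at most the number of merges, i.e.\ at most $N-1$. Each pass of the outer loop does either exactly one push or at least one (primary) merge, so the outer loop runs at most $(N-1)+(N-1) = O(n+m)$ times, each pass costing $O(1)$ outside of the cascade.

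Finally I would bound the total cost of the cascades. Each execution of the innermost \texttt{while} body performs exactly one merge plus $O(1)$ work; each pass of the middle \texttt{while} either ends the cascade through the \textbf{break} case --- which happens at most once per cascade --- or performs the merge associated with ``set $c$ to $e$'' and loops again; and a cascade is entered at most once per \textbf{else}-branch pass of the outer loop, hence at most $N-1$ times. Consequently the total number of cascade-loop iterations is at most (number of merges) $+$ (number of cascades) $= O(n+m)$, each costing $O(1)$ because all traversal and overlap tests involve only $C(s)$ and an adjacent list node. Adding everything, the algorithm runs in $O(n+m)$ time.

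The only genuinely delicate point is this last step: the cascades form doubly nested loops, so one must show that no iteration of an inner loop is wasted. The clean way is to charge each such iteration either to a merge (of which there are globally only $N-1$) or to the single terminating \textbf{break} of its cascade (of which there are at most $N-1$); doing this correctly hinges on verifying that the ``set $c$ to $e$'' step is itself accompanied by a merge, so that the middle loop cannot spin without shrinking the number of clusters.
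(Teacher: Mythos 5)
Your proof is correct and takes essentially the same approach as the paper's: bound the number of merges by $n+m-1$ (each reduces the cluster count by one), bound the number of pushes by the number of merges (since clusters stay on the chain until merged), and observe that nearest-neighbor queries are $O(1)$ on a sorted line. Yours is a more careful version of the same argument — the potential-function accounting and the explicit charging of cascade-loop iterations to merges formalize what the paper states tersely as ``A merge can be done in $O(1)$ time, so the total time spent doing merges is $O(m+n)$.''
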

\begin{proof}
Initially, there are $n+m$ clusters.
Each merge operation reduces the number of clusters by one, so the number of merge operations is $n+m-1$. A merge can be done in $O(1)$ time, so the total time spent doing merges is $O(m+n)$. Each iteration of the main loop either causes at least one merge or adds a new cluster to the chain. Since clusters stay in the chain until they are merged, this can only happen $O(n+m)$ times, so there are $O(n+m)$ iterations. In 1D, finding the NN of a cluster simply involves checking the previous and next clusters, which can be done in $O(1)$.
\end{proof}

Given an arbitrary problem instance, let NNC denote the solution output by Algorithm~\ref{alg:clientcover} and OPT denote an optimal solution.

\begin{theorem}\label{thm:clientcover}
$\mbox{cost}(NNC)\leq 2\mbox{cost}(OPT)$.
\end{theorem}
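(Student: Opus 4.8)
The plan is a charging argument organised around the coverage intervals $I_c$ hinted at in Figure~\ref{fig:clientmerge}. The first step is to account for the cost of $NNC$ by a decomposition over growth events: every time a server $s$ grows — whether to cover a freshly merged client cluster, or a partially overlapping cluster encountered during the inner \textbf{while} loop of Algorithm~\ref{alg:clientcover} — let $g_c$ be the resulting increase of the radius of $s$, where $c$ is the client cluster just covered. Since a server's final radius is the sum of its growths, $\mathrm{cost}(NNC)=\sum_c g_c$. For each such event I define $I_c$ to be the interval newly covered by the disk of $s$ on the side where $c$ lies; a one-line computation gives $|I_c|=g_c$, so $\mathrm{cost}(NNC)=\sum_c|I_c|$. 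Two features of $I_c$ will matter later: its far endpoint is an actual client of $c$ (the one farthest from $s$), and its near endpoint is the extreme point that the server cluster $c$ merged with had already reached — which, because the no-ties rule forces $s$ to be the \emph{near} one of that cluster's two extreme servers, is exactly that cluster's endpoint (a server, or an already-covered point).

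The second step is to show that the intervals $\{I_c\}$ are pairwise disjoint. Each $I_c$ is, by construction, covered for the first time at its own event and stays covered thereafter, so it can meet neither an earlier nor a later coverage interval. The only delicate point is the cascade: a single merge can set off a sequence of growths that alternate sides, each iteration replacing $c$ by the cluster $e$ that partially overlaps $C(s)$, and one must check that the portion of $e$ already inside $C(s)$ is \emph{not} part of $I_e$ — only the sticking-out part is — so the decomposition and the disjointness both survive the cascade. Combined with the first step, this shows that $\mathrm{cost}(NNC)$ equals the total length of a family of disjoint intervals, and that every client lies in exactly one of them.

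The remaining, and main, task is to show $\mathrm{cost}(OPT)\ge\tfrac12\sum_c|I_c|$. I would argue roughly as follows. Fix $I_c=[x_c,f_c]$ with far client $f_c$ at the right end (the other case is symmetric), and look at the $OPT$ disk $D$, centered at a server $\tau$ with radius $\rho$, covering $f_c$. The structural input from the algorithm is twofold: (i) there is no server strictly inside $I_c$ — between $x_c$ and the leftmost client of $c$ the gap is empty because the merge partner is $c$'s nearest neighbour at merge time, and no server can sit between two clients of $c$ or the corresponding sub-clusters could never have merged; and (ii) if $\tau$ were to the right of $f_c$, then the cluster just right of $c$ — hence $\tau$ — would be farther from $f_c$ than $c$'s merge partner was (because $c$ chose to merge leftwards), and that distance is exactly the ``gap'' part of $I_c$. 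So either $\tau$ lies at or left of $x_c$, giving $\rho\ge f_c-\tau\ge|I_c|$ outright, or $\tau$ lies right of $f_c$, giving $\rho$ larger than the gap, in which case the clients of $c$ themselves — which, by the very fact that they formed a cluster, can only be covered by servers outside their span — force additional $OPT$ cost comparable to the remaining ``client-span'' part of $I_c$. Charging to $OPT$ in each case and using that the $I_c$ are disjoint and each carries its own clients should keep the charges from colliding; summing over $c$ then yields the theorem.

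I expect the third paragraph to be where essentially all the work lies. The hard part is turning the slogan ``clients that $NNC$ grouped into one cluster cannot be covered cheaply by anyone'' into a rigorous, collision-free charging from coverage intervals to $OPT$ disks, since $OPT$ may split the clients of one $I_c$ among several disks and may load several $I_c$'s onto a single disk; the no-ties assumption and the decreasing-distance chain invariant are precisely what should make this accounting go through with a loss factor of exactly $2$. The first two paragraphs are, by contrast, mostly careful bookkeeping about the merge and cascade operations.
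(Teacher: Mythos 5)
Your plan follows the paper's overall architecture: decompose $\mathrm{cost}(NNC)$ into coverage intervals, show the intervals are disjoint, and then charge them against $OPT$. The bookkeeping in your first two paragraphs is sound and corresponds to the paper's definition of $I_c$ together with Lemma~\ref{lem:disjoint} (with a small difference: the paper defines one $I_c$ per \emph{client}, splitting your cluster-level interval at each client, which is what makes Remark~\ref{lem:imustoverlap} — every $I_c$ touches an OPT disk — directly applicable).

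The genuine gap is your third paragraph, which you yourself flag as ``where essentially all the work lies.'' What the paper supplies there, and your sketch does not, is Lemma~\ref{lem:leftrightcase}: whenever a coverage interval $I_c$ extends past a boundary of $D_O(s)$, there is an interval $J\subset D_O(s)$ on the same side of $s$, \emph{free of coverage intervals}, with $|J|>|I_c|$. That ``empty interval inside the OPT disk'' is the offset that keeps the factor at exactly $2$: you measure all coverage intervals against the \emph{diameters} of the OPT disks (giving the factor $2$), and each protruding piece of an $I_c$ is paid for by coverage-interval-free length that it leaves behind inside the disk. Your case analysis (charge $|I_c|$ directly to the radius $\rho$ when $\tau\le x_c$; invoke the cluster's own clients when $\tau>f_c$) does not supply a collision-free accounting — in the first case many coverage intervals can live inside the same $D$ and all compete for $\rho$, and in the second case you never say how much additional OPT cost is forced or which disk gets charged. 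In short, your decomposition and disjointness steps match the paper; the per-OPT-disk argument that turns the picture into an actual factor-$2$ bound is missing, and the paper's Lemma~\ref{lem:leftrightcase} is what you would need to supply there.
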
 

We follow the proof idea for the greedy algorithm from~\cite{carrots}. We ``charge'' the disk radii in NNC to disjoint ``coverage intervals'', $I_c$, each of which is associated with a client $c$, and such that $\sum|I_c|=\mbox{cost}(\mbox{NNC})$, where $|I|$ denotes the length of an interval $I$. If the union of these intervals (and therefore the sum of their lengths) were entirely contained within the disks in OPT, then NNC would trivially be at most double the sum of radii in OPT. If that is not the case, we show that the length of every coverage interval outside of the disks in OPT is accounted for by an equal or greater absence of coverage intervals inside an OPT disk.

\begin{definition} Suppose that a server cluster $S$ and a client cluster $C$ are merged in Algorithm~\ref{alg:clientcover}, and, as a result, server $s$ is expanded to cover clients $c_1, \ldots, c_k$, in order of proximity to $s$. If $C$ is to the right of $s$, define the coverage interval $I_{c_1}$ as the open interval $(s_b,c_1)$, where $s_b$ is the right-most boundary of the disk of $s$ before the expansion, and define $I_{c_i}$ as $(c_i,c_{i-1})$ for $1 < i \leq k$. If $C$ is to the left, the intervals are defined symmetrically.
\end{definition}

See Figure~\ref{fig:clientmerge}, bottom, for an example of the coverage intervals.
To prove Theorem~\ref{thm:clientcover}, we need the following intermediate results.

\begin{lemma}\label{lem:disjoint}
The intervals $I_c,I_{c'}$ are disjoint if $c\not=c'$.
\end{lemma}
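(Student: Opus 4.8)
The plan is to argue that each coverage interval is contained in the region of the line that one server disk \emph{newly} sweeps over during a single expansion, and that these swept regions are pairwise disjoint. Concretely, fix an expansion event in which a server $s$ grows to cover $c_1,\ldots,c_k$, and assume without loss of generality that the client cluster $C$ lies to the right of $s$, so $c_1<\cdots<c_k$, $I_{c_1}=(s_b,c_1)$ and $I_{c_i}=(c_{i-1},c_i)$ for $i\ge 2$. First I would handle intervals coming from the \emph{same} event: $I_{c_1},\ldots,I_{c_k}$ are consecutive open intervals with breakpoints $s_b<c_1<\cdots<c_k$, so they share only endpoints and are pairwise disjoint.

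The substance is the cross-event case, and here I would establish the invariant that the \emph{sweep region} $U:=(s_b,c_k)$ of such an event is (i) disjoint from every disk immediately before the event and (ii) contained in the disk of $s$ immediately after it. Part (ii) is immediate, since after the growth the disk of $s$ is $[s-d^*,s+d^*]$ with $s+d^*=c_k$ and $s-d^*\le s_b$. Part (i) is where the algorithm-specific reasoning lives. The key structural facts are: to cover a client cluster lying to its right, Algorithm~\ref{alg:clientcover} always grows $s_r$, the server of the cluster whose disk reaches furthest right, because this yields the smallest increase in $\sum r_i$; hence $s_b$ is exactly the right endpoint of that server cluster's interval. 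The client cluster $C$ being merged is the next cluster along the chain, so by the non-overlap invariant stated just before the algorithm no disk lies strictly between $s_b$ and the near endpoint of $C$, and the rest of $U$ lies inside the interval of $C$ itself — or, in a cascade step, inside the still-uncovered part of a partially overlapping client cluster that lies beyond the current coverage frontier — a region containing no disk. Hence $U$ carries no coverage just before the event.

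Granting the invariant, the lemma follows: disks never shrink, so coverage is monotone over the execution. If expansion event $E$ precedes expansion event $E'$, then by (ii) the sweep region of $E$ is covered at the time of $E'$, while by (i) the sweep region of $E'$ is uncovered just before $E'$; the two sweep regions are therefore disjoint, and so are the coverage intervals they respectively contain. Together with the same-event case this gives $I_c\cap I_{c'}=\emptyset$ whenever $c\ne c'$.

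The step I expect to be the main obstacle is making part (i) of the invariant airtight in the presence of the cascading merges: one must check that whenever a server is regrown inside a cascade it still grows outward from the current frontier of its cluster's coverage, i.e.\ that every client it newly covers lies strictly beyond that frontier and that the gap between the frontier and the first such client is free of coverage. This in turn rests on two facts about the cascade — that a client cluster absorbed \emph{entirely} into an already-grown server cluster produces no coverage intervals (its clients are already covered, so no server grows there), so only the part of a \emph{partially} overlapping client cluster sticking out past the frontier ever generates intervals; and that the ordering-and-non-overlap invariant of the chain leaves nothing hidden inside that sticking-out gap.
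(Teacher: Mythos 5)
Your proof follows essentially the same route as the paper's. The paper splits into two cases exactly as you do: clients covered in the same merge event (intervals are consecutive by construction) versus clients covered in different events (the earlier interval ends up inside a server cluster/disk, and later intervals are created only in uncovered territory and so cannot intersect existing server clusters). Your ``sweep region'' framing, with invariants (i) and (ii) plus monotonicity of disk growth, is simply a more explicit rendering of the paper's one-sentence argument ``after $c$ is covered, $I_c$ is inside a server cluster; coverage intervals from clients covered later do not intersect existing server clusters.'' The minor difference — you reason about disks, the paper about server clusters — is cosmetic, since the coverage intervals created in an event lie inside the expanding server's disk anyway. Your flagged concern about making (i) airtight in the cascade is reasonable, and it does hold: during a cascade only $s$'s disk grows, the non-overlap invariant at the start of the iteration ensures no other disk meets the partially overlapping client cluster, and hence the sticking-out part of that cluster is disk-free; but note the paper's own proof also does not spell this out, so you are at the same level of rigor.
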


\begin{proof}
If $c$ and $c'$ belong to the same client cluster at the time $c$ is covered, it follows from the definition. Otherwise, let the cluster of $c$ be the one merged with a server cluster first. Then, after $c$ is covered, the interval $I_c$, if it exists, is inside a server cluster. Coverage intervals from clients covered later do not intersect existing server clusters.
\end{proof}

For a server $s$, let $D_O(s)$ denote the disk of $s$ in OPT.
To offset the intervals $I_c$ which occur outside of OPT disks, we need the following.

\begin{remark}\label{lem:imustoverlap}
Every $I_c$ intersects or has a shared endpoint with a disk $D_O(s)$.
\end{remark}
This is because $c$ must be covered by OPT.

Suppose that for some $c$, $I_c$ is not contained in any disk in OPT. Then, by Remark~\ref{lem:imustoverlap}, $I_c$ intersects or has a shared endpoint with a disk $D_O(s)$ in OPT. We consider the two possible cases separately, where $s$ is to the left or to the right of $c$. We show (Lemma~\ref{lem:leftrightcase}) that in either case there is an interval $J$, between $s$ and $c$ and inside $D_O(s)$, which is disjoint from all coverage intervals (Fig.~\ref{fig:clientproof}). Note that at most one $I_c$ may intersect $D_O(s)$ on each side on $s$, so the intervals $J$ do not overlap.

\begin{lemma}\label{lem:leftrightcase}
If a client $c$ belongs to $D_O(s)$ for some server $s$ and $I_c$ extends across the right (left) boundary of $D_O(s)$, 
then there is an interval $J$ in $D_O(s)$, to the right (left) of $s$, free of coverage intervals, and such that $|J|>|I_c|$. 
\end{lemma}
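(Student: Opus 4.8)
\emph{Approach.} I would prove the right-boundary case; the left-boundary case is its mirror image. Write $D_O(s)=[s-\rho,s+\rho]$, and let $\sigma$ be the NNC server whose expansion produced $I_c$, covering clients $c_1,\dots,c_k$ in order of proximity to $\sigma$, with $c=c_i$. The first step is to use the definition of coverage intervals to locate everything: $I_c$ lies on the side of $c$ toward $\sigma$, so---since $c\in D_O(s)$ gives $c\le s+\rho$ while $I_c$ crosses $s+\rho$---the server $\sigma$ must lie to the right of $D_O(s)$ and this expansion grew $\sigma$'s disk leftward. Thus $I_c=(c,w)$ with $c\le s+\rho<w$, where $w$ is $c_{i-1}$ if $i>1$ and the pre-expansion left boundary of $\sigma$'s disk if $i=1$.

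\emph{Locating $s$ and $J$'s habitat.} Next I would observe that at the instant of the expansion the open interval $(c,w)$ meets no cluster: for $i>1$ it lies strictly between two consecutive clients of the client cluster $C\ni c$, and for $i=1$ it is exactly the gap between $C$ and $\sigma$'s server cluster, which are adjacent since they are being merged. As $s$ lies in some cluster and $w>s+\rho\ge s$, this forces $s\le c$, and since $s$'s cluster cannot overlap the span of $C$ it forces $s<c_k$, the leftmost client of $C$ (note $\rho>0$ because $c\ne s$). Consequently $c_i,\dots,c_k\in(s,s+\rho]\subseteq D_O(s)$, the coverage intervals $I_{c_{i+1}},\dots,I_{c_k}$ tile $(c_k,c_i)$, and $(c_i,s+\rho]\subseteq I_c$; hence the only region inside the right half of $D_O(s)$ that can possibly host $J$ is $(s,c_k)$.

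\emph{The length bound.} The heart of the argument is showing $|I_c|<c_k-s$. I would extract from the chain invariant the fact that every merge performed by Algorithm~\ref{alg:clientcover} realizes a distance strictly smaller than the distance from the left of the two merged clusters to its own left neighbor. Applied to the merge that creates $I_c$ (for $i=1$, the merge of $C$ with $\sigma$'s cluster; for $i>1$, the merge that first joins $c_i$'s block to $c_{i-1}$'s block)---whose realized distance is exactly $|I_c|$---this says that $|I_c|$ is below the distance from a contiguous block of $C$-clients to its left neighbor, which is either the next gap of $C$ further to the left or, once the block already reaches down to $c_k$, at most $c_k-s$ (the left neighbor then lies outside $C$, hence no farther right than the right end of $s$'s cluster, which is $\ge s$). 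Telescoping this inequality across the successive gaps of $C$ toward $c_k$---a finite, strictly leftward chain---yields $|I_c|<c_k-s$, with strictness from the no-ties assumption.

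\emph{Main obstacle.} What remains, and what I expect to be the hard part, is to exhibit an interval $J\subseteq(s,c_k)$ that is genuinely free of coverage intervals in the final configuration and long enough. The natural choice is the gap separating $C$ from its left neighbor at the time of the expansion: it is cluster-free, and no coverage interval created earlier can protrude into it (a coverage interval sits inside a server cluster at creation, and no server cluster straddles that gap). The delicate point is that later growth of $\sigma$ or of other servers---including the cascades driven by the inner while-loops of Algorithm~\ref{alg:clientcover}, which make $\sigma$'s cluster disjoint from all others before the algorithm continues---must not subsequently consume that gap; here I would argue that, since $C$'s span already reaches $c_k$ right after the expansion, any further leftward growth of $\sigma$ would have been a strictly cheaper move available earlier, contradicting the NNC selection. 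Making this airtight, together with correctly identifying the ``charging'' merge when the expansion occurs inside the enhancement cascade (so that the chain-invariant fact above actually applies), is the real technical content; see Figure~\ref{fig:clientproof} for the picture of $J$.
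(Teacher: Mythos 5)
Your argument diverges from the paper's at the very first step, and the divergence creates gaps that you yourself flag but do not close. The paper does not work within a single merge: it takes $c_1,\dots,c_k$ to be \emph{all} clients of $D_O(s)$ that are covered, possibly across several different merges, from a server to the right of $D_O(s)$, sets $x$ to be the input element immediately to their left and $y$ the one immediately to their right, and then obtains $J=(x,c_1)$ with $|J|=d(x,c_1)>d(c_k,y)\ge|I_{c_k}|$ in a \emph{single} step from the chain order (if $d(x,c_1)<d(c_k,y)$, then $x$ and $c_1$ would merge before $y$ joins the chain, contradicting either that $c_1$ is covered from the right or that $c_1$ is leftmost among clients so covered). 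You instead confine yourself to the one merge that produces $I_c$, take $c_k$ to be the leftmost client of that client cluster $C$, and try to cascade the chain invariant across the successive gaps of $C$ down to $c_k$ and then to $C$'s left neighbor. This telescoping needs precisely those merges to satisfy the ``realized distance is below left-neighbor distance'' property, but merges that happen inside the cascade (the inner while-loops, after a server cluster already overlaps the next cluster) are \emph{not} MNN selections and are not controlled by the chain invariant; your parenthetical about ``correctly identifying the charging merge'' acknowledges this, but it is exactly the place where the argument could fail, and you leave it open. Likewise your final paragraph concedes that exhibiting a concrete $J$ free of coverage intervals is ``the real technical content'' and only sketches why later growth cannot eat into the gap.

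A second, independent problem is the opening claim that the left-boundary case is the mirror image of the right one. It is not: the algorithm builds the chain left-to-right, so the two sides are structurally asymmetric, and the paper's proof of the left case uses a genuinely different construction (the bound $d(y,c_1)\ge|I_{c_1}|$ there takes $y$ to be the right boundary of the cluster preceding $c_1$ \emph{at the time $c_1$ enters the chain}, not an input element, and the free interval is identified as the widest gap among $c_1,\dots,c_k,x$ rather than a fixed pair). Simply mirroring the right-case argument would not reproduce this. In short, the high-level decomposition ($J$ must live in $(s,c_k)$, charge $|I_c|$ against a gap near $s$) is a reasonable and genuinely different route from the paper's, but as written the proof has an unresolved length-bound step around cascade merges, an unconstructed $J$, and an incorrect symmetry claim for the left case.
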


\begin{proof}
\textbf{Right case.}
Consider first the setting in Figure~\ref{fig:clientproof}, right: suppose that in OPT, $s$ covers some clients which in NNC are covered for the first time (the time where their coverage intervals are defined) from a server to the right of $D_O(s)$. Let $c_1,\ldots,c_k$, $k\geq 1$, be all such clients. Then, the coverage interval $I_{c_k}$ of $c_k$ extends across the right boundary of $D_O(s)$. 
Let $x$ be the input element (client or server, possibly $s$) immediately to the left of $c_1$, and $y$ the input element immediately to the right of $c_k$. Note that $d(c_k,y)\geq |I_{c_k}|$, since a coverage interval cannot extend past another input element.

We show that \textit{(i)} $d(x,c_1)>d(c_k,y)$ (and thus, $d(x,c_1)>|I_{c_k}|$), and that \textit{(ii)} the interval $(x,c_1)$ is free of coverage intervals.
Claim \textit{(i)} follows from the fact that if $d(x,c_1)<d(c_k,y)$, then $x$ and $c_1$ would be merged before $y$ is added to the chain, which cannot happen: if $x$ is a server, then $c_1$ would be covered from the left, and if $x$ is a client, $x$ and $c_1$ would be merged together, contradicting that $c_1$ is the left-most client covered from a server to the right of $D_O(s)$. For \textit{(ii)}, note that $c_1$ was covered from the right (by definition) and $x$ either was a client covered from the left (by definition of $c_1$) or a server which does not cover $c_1$. In the former case, $I_x$ has its right endpoint at $x$, and in the latter case, $x$ is not the client to cover $c_1$. Thus, there are no coverage intervals in $(x, c_1)$.

\textbf{Left case.}
Now consider the setting in Figure~\ref{fig:clientproof}, left. The setting is similar, except that $c_1,\ldots,c_k$, $k\geq 1$, are to the left of $s$ and are covered by a server to the left of $D_O(s)$, and it is the interval $I_{c_1}$ that extends across the left boundary of $D_O(s)$. 
Define $x$ as in the previous case but symmetrically: it is the input element immediately to the right of $c_k$. We define $y$ slightly differently: it is the right boundary of the cluster preceding $c_1$ at the time $c_1$ is added to the chain (not necessarily an input element). Note that $d(y,c_1)\geq |I_{c_1}|$, since a coverage interval for $c_1$ would start at, or to the right of $y$.

Let $u$ and $v$ be the two consecutive elements among $c_1,\ldots,c_k,x$ maximizing $d(u,v)$ (note that $x$ may be a server).
We show that \textit{(i)} $d(u,v)>d(y,c_1)$ (and thus, $d(u,v)>|I_{c_1}|$), and \textit{(ii)} $(u,v)$ is free of coverage intervals.
For \textit{(i)}, assume for a contradiction that $d(u,v)<d(y,c_1)$. Then, $c_1$ and $x$ (and all the elements in between) would be clustered together before they are merged with $y$, because $y$ would not be the NN of $c_1$ until all these merges between closer elements happen. However, this contradicts that $c_k$ is the right-most client covered for the first time from the left of $D_O(s)$. Therefore, we have \textit{(i)}.

For \textit{(ii)},
note that if $v$ is not $x$, then $v$ and $x$ (and all the elements in between) get clustered together before they are merged with $u$, because $u$ would not be the NN of $v$ until all these merges between closer elements happen. However, this contradicts that $c_k$ is the right-most client covered (for the first time) from the left of $D_O(s)$. Therefore, $v$ is $x$, and $c_k$ is $u$. \textit{(ii)} now follows by an analogous reasoning as in the symmetric case.
\end{proof}
We are ready to prove Theorem~\ref{thm:clientcover}.

\begin{figure}
    \centering
    \includegraphics[width=0.99\textwidth]{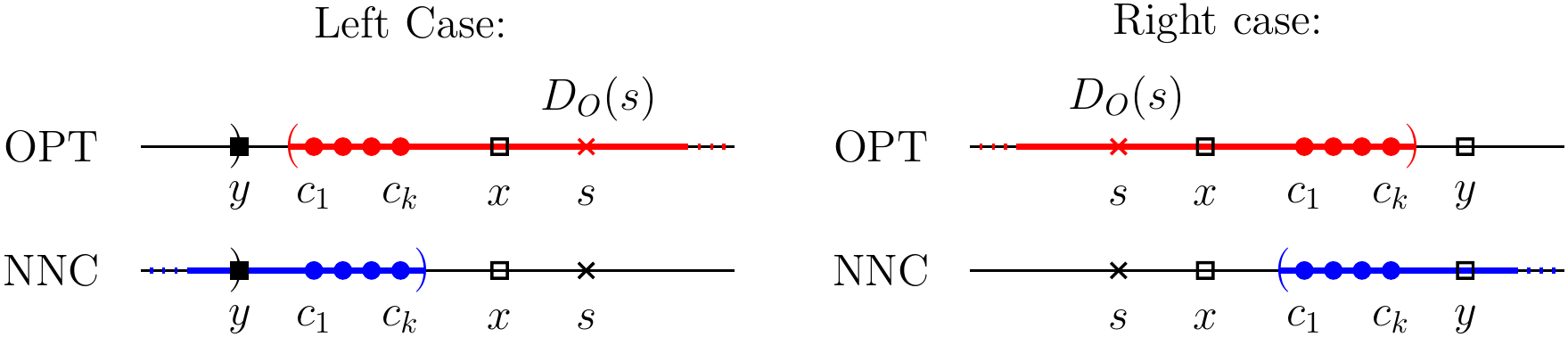}
    \caption{Illustration of the settings in the proof of Lemma~\ref{lem:leftrightcase}.} \label{fig:clientproof}
\end{figure}

\begin{proof}[Proof of Theorem~\ref{thm:clientcover}]

As mentioned, $\mbox{cost}(\mbox{NNC}) = \sum_c I_c$. The total length of the coverage intervals contained in OPT disks does not exceed twice the sum of the OPT radii (recall that by Lemma~\ref{lem:disjoint}, the coverage intervals are pairwise-disjoint). Consider now the parts of the coverage intervals outside the disks of OPT. Note that for each OPT disk, there is at most one interval $I_c$ overlapping the disk on each side, and by Remark~\ref{lem:imustoverlap}, every $I_c$ touches or overlaps a disk.
Furthermore, by Lemma~\ref{lem:leftrightcase}, for every length $\ell$ of coverage intervals in NNC outside the OPT disk of a server $s$ to the right (left) of $s$, there is at least $\ell$ length within $s$'s disk to the right (left) of $s$ that is free of coverage intervals. Therefore the approximation ratio of 2 is preserved.
\end{proof}

See Figure~\ref{fig:clienttight} for an instance that shows that the 2-approximation is tight. 

\begin{figure}
    \centering
    \includegraphics[width=0.6\textwidth]{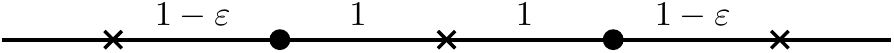}
    \caption{A tight example for the 2-approximation greedy algorithm of Alt et al.~\cite{carrots}. It is also a tight example for NNC, as $\mbox{cost}(NNC)=2-2\eps$, and $\mbox{cost}(OPT)=1$.}
    \label{fig:clienttight}
\end{figure}

\subsection{Greedy in higher dimensions}
As mentioned, the greedy algorithm which makes the smallest disk growth, at each step, which covers a new client, achieves a $2$-approximation in the 1D setting~\cite{carrots}. Does it achieve a good approximation ratio in higher dimensions? In this section, we give a negative answer. It performs poorly in two dimensions, even when servers are constrained to lie on a line (also known as the $1.5$D case), and for $\alpha=1$.
We show that in the instance illustrated in Figure~\ref{fig:15d}, left, Greedy is a factor of $2m/\sqrt{5}$ worse than the optimal solution.

In this instance, a set of $m$ servers are placed along a horizontal line with a distance of $1$ between each consecutive pair. Above each server, we place a ``column'' of clients stretching up to distance $m$ above the servers. The clients in a column are evenly spaced and at distance $d=\sqrt{m^2+1}-m$ of each other. Thus, there are $m/d = m(m+\sqrt{m^2+1})$ clients in each column. The total number of clients is roughly $2m^3$.

\begin{lemma}
Greedy's approximation ratio in the 1.5D setting with $\alpha=1$ is no better than $2m/\sqrt{5}$.
\end{lemma}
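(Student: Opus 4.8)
The plan is to compare the cost of a carefully chosen execution of Greedy on this instance, which I claim equals $mNd$ where $N=\lfloor m/d\rfloor\approx m/d$ is the number of clients per column (so $\approx m^2$), against the cost of an optimal solution, which is only $\Theta(m)$; dividing gives a ratio of $2m/\sqrt5$ up to lower-order terms.

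\emph{Bounding the optimum.} It suffices to place a single disk, centered at the server $s_{j_0}$ with $j_0=\lceil m/2\rceil$, whose radius is the distance from $s_{j_0}$ to the farthest client. Since all clients lie in $[1,m]\times[0,m]$ and every column is within horizontal distance $m/2$ of $s_{j_0}$, this radius is at most $\sqrt{(m/2)^2+m^2}=\frac{\sqrt5}{2}m$; hence $\mbox{cost}(\mbox{OPT})\le\frac{\sqrt5}{2}m$.

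\emph{Bounding Greedy.} I would exhibit an execution that proceeds in $N$ rounds. In round $t$ (for $t=0,\dots,N-1$), for $j=1,\dots,m$ in order, it grows the disk of $s_j$ from radius $td$ to radius $(t+1)d$, which covers the still-uncovered client at height $(t+1)d$ of column $j$ by a disk growth of exactly $d$; throughout, every server's radius stays in $\{td,(t+1)d\}$. To see this is a legal run I would show that each such move has minimum cost among all moves covering a new client. An uncovered client in column $\ell$ sits above the disk of $s_\ell$, whose radius is $td$ or $(t+1)d$, so the client's height is at least $(t+1)d$; therefore covering it by growing $s_\ell$ costs at least $d$, while covering it from a server at horizontal distance $\ge 1$ and radius $\le(t+1)d$ costs at least $\sqrt{1+((t+1)d)^2}-(t+1)d$. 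This last quantity is $\ge d$ exactly when $(2t+3)d^2\le1$, which holds for all $t\le N-1$ because, by the defining identity $d(2m+d)=1$ of $d=\sqrt{m^2+1}-m$, one has $(2N+1)d^2\le(2m/d+1)d^2=2md+d^2=1$. (Ties at cost exactly $d$, such as growing an already-advanced server by a further $d$, are broken in favor of continuing the sweep.) After round $N-1$ every server has radius $Nd$ and every client is covered, so Greedy halts with $\mbox{cost}(\mbox{Greedy})=mNd$; and since $Nd>m-d$ while $2md=1-d^2$, this exceeds $m(m-d)=m^2-md>m^2-\frac12$.

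Putting these together, $\mbox{cost}(\mbox{Greedy})/\mbox{cost}(\mbox{OPT})>(m^2-\frac12)/(\frac{\sqrt5}{2}m)$, which is $2m/\sqrt5$ up to a $(1-o(1))$ factor, as claimed. The main obstacle is the middle step: verifying that the lockstep sweep is a legal Greedy execution, i.e., that ``reaching one column over'' is never cheaper than ``reaching the next client straight up.'' This is precisely the inequality $\sqrt{1+((t+1)d)^2}-(t+1)d\ge d$ — the sole place where the exact value of $d$ is used — together with the bookkeeping that, at every step, the only newly coverable clients lie at height $\ge(t+1)d$ in their columns (clients lower than that in neighboring columns being already covered by their own servers). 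The optimum bound and the final arithmetic are routine.
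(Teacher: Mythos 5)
Your proposal is correct and follows essentially the same approach as the paper: exhibit a layer-by-layer ("lockstep") Greedy execution in which each server grows straight up its own column, bound the optimum by a single centered disk of radius $\frac{\sqrt5}{2}m$, and verify legality of the sweep via the inequality $\sqrt{1+((t+1)d)^2}-(t+1)d\ge d$, which reduces to $(2t+3)d^2\le 1$ and is tight at the top layer by the identity $2md+d^2=1$. The paper argues the same inequality geometrically (``it suffices to check when $p$ is the very last client''), whereas you make the algebra explicit; both are the same proof in substance.
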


\begin{proof}
Consider the instance described above and illustrated in Figure~\ref{fig:15d}.

The optimal solution is to cover all clients with a single server located at the center. By the Pythagorean Theorem, the cost of the optimal solution is $\sqrt{5}m/2$. In contrast, we show that Greedy would choose to cover the clients in each column by the server at the bottom of it, resulting in a cost of $m^2$. Thus, Greedy is $2m/\sqrt{5}$ times worse than the optimal solution.

We assume that Greedy breaks ties by choosing clients closer to the horizontal line first (alternatively, we can perturb the positions of the clients slightly to guarantee this tie breaking). Then, we can show that Greedy covers the clients by ``layers'', where a layer is the set of clients at a given height. To see this, assume, for the sake of an inductive argument, that Greedy has grown each disk to cover the clients up to a given layer. Then, some server $s$ expands to cover a client $p$ in next layer, which would be the one in the same column. Let $s'$ and $p'$ be the server and client next to $s$ and $p$, respectively. We must argue that the disk of $s$ is not closer to $p'$ than the disk of $s'$. Note that it suffices to show that this does not happen when $p$ is the very last client in the column above $s$. This is because the further away $p$ is from $s$, the bigger the radius of the disk of $s$, resulting in a disk closer to $p'$. This is illustrated in Figure~\ref{fig:15d}, right. The distance $d$ between two consecutive points in a column is chosen precisely so that, in this scenario where $p$ is the last point, the disk of $s$ is exactly as close to $p'$ as the disk of $s'$. Depending on the tie-breaking rule (or changing $d$ to be slightly smaller), $s'$ will grow to cover $p'$ and not $s$.
\end{proof}

\begin{figure}
    \centering
    \includegraphics[width=0.8\textwidth]{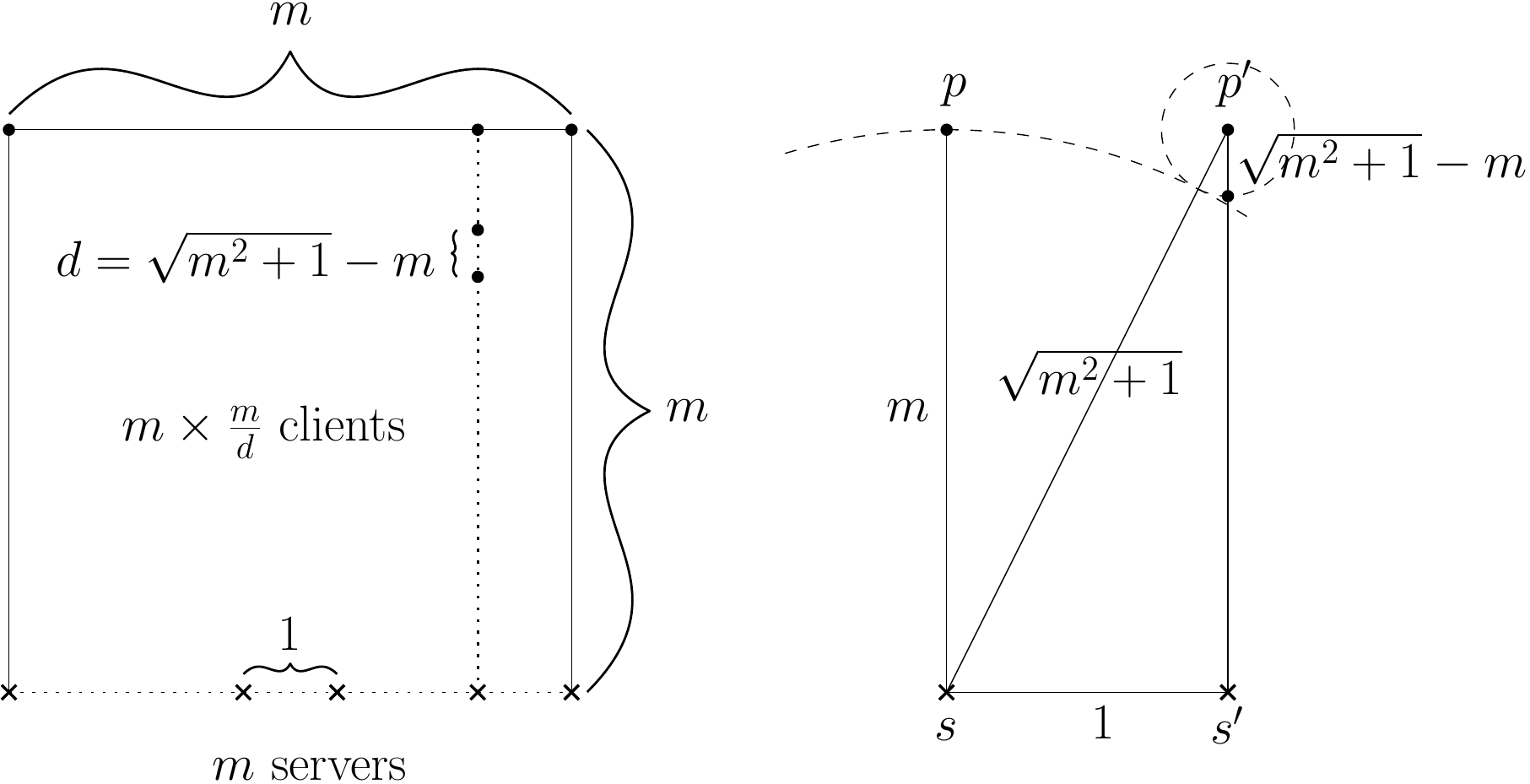}
    \caption{\textbf{Left:} Bad instance for the greedy algorithm for server cover. \textbf{Right:} Illustration (not to scale) that the disk of $s$ is not closer to $p'$ than the client below $p'$.}
    \label{fig:15d}
\end{figure}

}{}

\section{Conclusions}\label{sec:conclusions}
Before this paper, NNC had been used only in agglomerative hierarchical clustering and stable matching problems based on proximity. This paper adds the following use cases:
\begin{itemize}
    \item Its first use in problems without symmetric distances (motorcycle graphs, Section~\ref{sec:motorcycle}). We showed that the chain still works for finding nearest-neighbor cycles.
    \item Its first use without a NN structure (multi-fragment TSP, which uses our new soft NN structure, Section~\ref{sec:mfh}).
    \item Its first use in problems without global-local equivalence (server cover, \autoref{sec:clientcover}). We showed that the chain can be adapted in settings where matching MNNs is not as good as matching overall closest pairs.
    \item Its first use in approximation algorithms (also server cover).
    \item Its first use in a graph-theoretical framework (Steiner TSP, \autoref{sec:mfgraph}).
    \item Its first use in stable matching problems not based on distances (narcissistic $k$-attribute stable matching, \autoref{sec:kattr}).
\end{itemize}
The above applications illustrate the two main points of this paper: first, that in several geometric problems, finding mutually nearest neighbors leads to the same solution as finding closest pairs, which we call global-local equivalence; second, that MNNs can be found efficiently thanks to the NNC algorithm.

We expect that this algorithm will find more uses in computational geometry.
When dealing with problems involving nearest neighbors or closest pairs in some way, one may check if a form of global-local equivalence holds. If so, one should then consider using the NNC algorithm. The main guidelines for designing NNC algorithms are: (1) each link in the chain should get closer to MNNs; (2) to avoid infinite loops, the chain should be acyclic. One should be careful to break ties consistently; (3) after finding and processing MNNs, all the previous links in the chain should remain valid. These simple ingredients are likely to lead to an algorithm with a runtime of the form $O(P(n)+nT(n))$, as seen throughout this paper.

\ifthenelse{\boolean{long}}{
\noindent
We conclude with some open questions.
\begin{itemize}
    \item Can we use specialized data structures? Throughout the paper, we have used fully dynamic data structures that allow insertions and deletions. However, NNC algorithms typically only use deletions. Further, query points are generally known beforehand. Therefore, specialized data structures with these considerations in mind may speed up the algorithms in this paper, and NNC algorithms in general.
    %\item What other problems exhibit global-local equivalence? All the problems in this paper exhibit some form of global-local equivalence, except server cover. We assume that there are more.
    \item Does the SNN structure have more uses? The SNN structure has a curious type of queries, where soft answers are not directly related to the query point. Nonetheless, we showed how it can be used as part of the SNNC algorithm and to solve the closest pair problem.
    \item Are other narcissistic stable matching models symmetric? As mentioned, narcissistic is a descriptive term for stable matching models where the preferences of each agent are determined by the agent's own qualities or attributes. We find it likely that NNC can be used for other narcissistic models, as such preferences seem unlikely to create cycles of first choices.
    \end{itemize}
}{}

\bibliography{biblio}

\ifthenelse{\boolean{long}}{}{

\appendix

\section{Stable Matching Problems}\label{sec:kattr}

\section{Server Cover}\label{sec:clientcover}

\section{Choice of Parameters}\label{sec:params}

\section{Proof of Global-Local Equivalence in Multi-Fragment TSP}\label{app:mfgle}

We show the proof for Lemma~\ref{lem:mfgle}.
\proofmfgle

\section{A soft Nearest-Neighbor Structure for Paths}\label{app:snnpath}

\section{Multi-fragment for Steiner TSP}\label{sec:mfgraph}

\section{Motorcycle Graphs: Special Cases and Remarks}\label{sec:mgremarks}

\section{Algorithm Visualizations}\label{app:snapshots}

\figsnncmftsp
\figmc

}

\end{document}